\title{Uncloneable Encryption from Decoupling}
\author[1,2]{Archishna Bhattacharyya \thanks{abhat086@uottawa.ca}}
\author[1,3]{Eric Culf \thanks{eculf@uwaterloo.ca}}
\affil[1]{Perimeter Institute for Theoretical Physics}
\affil[2]{Department of Mathematics and Statistics, University of Ottawa}
\affil[3]{Institute for Quantum Computing and University of Waterloo}
\date{}
\begin{document}

\maketitle

\vspace{0.5cm}

\begin{abstract}
    We show that uncloneable encryption exists with no computational assumptions, with security $\widetilde{O}\parens*{\tfrac{1}{\lambda}}$ in the security parameter $\lambda$.
\end{abstract}

\vspace{2cm}

\tableofcontents

\newpage

\section{Introduction}

Uncloneable cryptography is a paradigm wherein the no-cloning principle of quantum mechanics~\cite{Par70,WZ82,Die82} is used to achieve classically-impossible security guarantees. This underpins much of the groundbreaking work in quantum cryptography, notably quantum key distribution~\cite{BB84} and quantum money~\cite{Wie83}. More recently, Broadbent and Lord~\cite{BL20} introduced the notion of \emph{uncloneable encryption}, which defines a stronger form of security. The goal of an uncloneable encryption scheme is to encode a classical message as a quantum ciphertext in order to guarantee that two non-interacting adversaries cannot both learn the message, even when given the encryption key. This is a security notion that is impossible classically because any classical ciphertext can be copied. Since then, many other uncloneable cryptographic primitives have been studied, \emph{e.g.}, quantum copy-protection~\cite{AK21,ALL+21,CLLZ21,CMP24}, secure software leasing~\cite{ALP21,BJL+21,KNY20arxiv}, quantum functional encryption~\cite{MM24arxiv}, uncloneable decryption~\cite{GZ20eprint,CLLZ21,SW22arxiv}, and uncloneable zero-knowledge proofs~\cite{JK25}.

However, a security proof for uncloneable encryption has been elusive. So far, security has been proven in the quantum random oracle model~\cite{BL20,AKL+22,AKL23}, which is a heuristic model used to provide evidence for cryptographic schemes. A variety of candidate schemes have been proposed, but their security remains unproven in the plain model\footnote{By plain model, we refer to security proven either without any computational assumptions, or with computational assumptions that are well-justified via existing constructions, such as one-way functions.}. For example,~\cite{CHV24arxiv,AB24} presented candidates relying on conjectures about uncloneable forms of indistinguishability obfuscation, and~\cite{BBCNPR24} presented another candidate relying on a conjecture about the value of a monogamy-of-entanglement game. Other work has concentrated on variants of uncloneable encryption, notably with interaction~\cite{BC23arxiv}, with independent decryption keys~\cite{KT22arxiv}, or with quantum keys~\cite{AKY24arxiv}. On the other hand, the possibility that uncloneable encryption is impossible has also been considered, with some work proving no-go theorems on possible schemes~\cite{MST21,AKL+22}. In particular, it is known that the states used to encode the messages must be highly mixed.

In this work, we demonstrate the existence of an uncloneable bit in the information-theoretic model, with \emph{no} computational assumptions. An uncloneable bit is a family of quantum encryptions of classical messages (QECMs) encoding a single bit that can be scaled to have arbitrarily good uncloneable security. Due to~\cite{HKNY24}, an uncloneable bit can be used to construct secure uncloneable encryption schemes for messages of arbitrary length, under standard cryptographic assumptions (see~\cref{sec:uncloneable-encryption} for more details). Hence, the uncloneable bit is a fundamental cryptographic primitive in uncloneable cryptography. Note however that, to the best of our knowledge, it remains an open question whether it is possible to extend the message length of a QECM while preserving information-theoretic uncloneable security. Throughout this work, the property of correctness (\cref{def:qecm}) is implicitly required, as to perfectly decrypt a QECM one requires states that are orthogonal.

The security of an uncloneable bit is defined by means of the following security game, played between an honest referee, Alice, and two cooperating malicious players, Bob and Charlie:
\begin{enumerate}[1.]
    \item Alice samples a random key $k$ and a bit message $x \in\{0,1\}$, and prepares the quantum ciphertext $\sigma^k_x$.
    \item $\sigma^k_x$ passes through an adversarially-chosen pirate channel outputting an entangled state in Bob and Charlie's systems.
    \item Alice informs Bob and Charlie of the key $k$.
    \item Without communicating, Bob and Charlie both try to guess the original message $x$.
    \item The players win if both of their guesses are correct.
\end{enumerate}
This game models a \emph{cloning attack} against Alice's QECM, illustrated in \cref{fig:cloning-attack}. By making a random but coordinated guess of the message bit, Bob and Charlie can always win the game with probability $\frac{1}{2}$. The level of security is quantified by how much better they can do than this in the winning probability. This is defined formally in~\cref{sec:uncloneable-encryption}. For example, if Alice encodes the message in one of the two conjugate-coding bases, Bob and Charlie can win with probability $\cos^2\parens*{\tfrac{\pi}{8}}\approx0.85$~\cite{TFKW13}. We study the Haar-measure encryption of a bit, where, in order to encode a bit, Alice samples a random basis from the Haar measure on the unitary group, and prepares a randomly-chosen state from among either the first or second half of the states in the basis, depending on the value of the bit. See \cref{def:haar-qecm} for the formal definition of this QECM. QECMs based on the Haar measure were first introduced in~\cite{MST21} and have been further studied in~\cite{PRV24}.

The security property that we refer to as \emph{uncloneable security} and show for the uncloneable bit is that, in the limit of large dimension, the success probability of a cloning attack must tend to $\frac{1}{2}$. Specifically, we show that the uncloneable security scales as $O(\tfrac{\log(\log d)}{\log d})$ where $d$ is the dimension, which translates to $\widetilde{O}(\tfrac{1}{\lambda})$ in the security parameter $\lambda$. This differs from the definition first proposed in~\cite{BL20}, which requires the scaling to be negligible in $\lambda$, but coincides with the notion of `weak uncloneable security' introduced in~\cite{BBCNPR24}. The possibility of a negligible bound, which we refer to as \emph{strong uncloneable security}, remains an open problem.

An important property used to study uncloneable encryption is its relationship to monogamy-of-entanglement (MoE) games. \emph{Monogamy} is a property of quantum entanglement~\cite{Terhal04} which asserts that in a tripartite system, if Bob is highly entangled with Alice, then Charlie can only be weakly entangled. One of the ways in which the strength of this property is quantified is via the winning probability of an MoE game. An MoE game, first defined in~\cite{TFKW13}, is a game played between a referee, Alice, and two cooperating players, Bob and Charlie, as follows:
\begin{enumerate}[1.]
    \item Bob and Charlie prepare a state shared amongst themselves and Alice, and then are separated and can no longer communicate.
    \item Alice samples a question $\theta$ and informs Bob and Charlie. 
    \item Alice makes a measurement specified by $\theta$ to get answer $x$.
    \item Bob and Charlie measure their parts of the state and both attempt to guess $x$.
    \item Bob and Charlie win if both their guesses are correct.
\end{enumerate}
The formal definition of an MoE game is given in \cref{sec:moe}. Notably, many QECMs have an associated MoE game where cloning attacks can be mapped to strategies for the game in such a way that the success probability is preserved~\cite{Cul22}. This can be seen as a type of equivalence between prepare-and-measure and entanglement-based schemes. The MoE game analogue of the Haar-measure encryption is illustrated in \cref{fig:moe-game} and defined formally in \cref{def:haar-moe}. Many known bounds on the cloning values of QECMs arise from studying the associated MoE game, for example using the overlap technique~\cite{TFKW13,CV22,CVA22} or the NPA hierarchy~\cite{JMRW16,BBCNPR24}. However, none of the known techniques have been able to give tight enough upper bounds on the MoE game value to demonstrate uncloneable encryption.

We achieve this by means of \textit{decoupling}, originally proposed in the context of error suppression \cite{SW02} — now a well-known paradigm in quantum information theory \cite{Dup10} with notable applications in quantum Shannon theory~\cite{HHWY08}, resource theories \cite{MBDRC17}, and more recently authentication \cite{AM17} and quantum encryption \cite{LM20} in cryptography. We make use of a one-shot refinement due to~\cite{DBWR14}.

\begin{figure}
    \centering
    \begin{tikzpicture}[scale=0.8]
        \begin{scope}[node distance=0.8 and 3.2,thick]
        \filldraw[dotted,draw=red!60!black,fill=black!5!white,rounded corners] (4,-2.8) rectangle (12.6,2.8);
        \node (alice)[shape=rectangle,draw=black,rounded corners] {\Large$A$};
        \node (pirate) [shape=rectangle,draw=red!60!black,rounded corners,right=of alice] {\Large$\Phi$};
        \node (bob) [shape=rectangle,draw=red!60!black,rounded corners,above right=of pirate] {\Large$B$};
        \node (charlie) [shape=rectangle,draw=red!60!black,rounded corners,below right=of pirate] {\Large$C$};
        \draw[-Latex] (alice) -- (pirate) node[above,pos=0.5]{$\sigma^k_x$};
        \draw[-Latex,red!60!black] (pirate) -- (bob);
        \draw[-Latex,red!60!black] (pirate) -- (charlie);
        \draw[red!60!black,decorate,decoration=snake] (bob) -- (charlie);
        \draw[double,Latex-] (alice) -- ++(0,1) node[above]{$k$};
        \draw[double,Latex-] (alice) -- ++(-1.5,0) node[left]{$x$};
        \draw[double,Latex-] (bob) -- ++(0,1) node[above]{$k$};
        \draw[double,Latex-] (charlie) -- ++(0,-1) node[below]{$k$};
        \draw[double,-Latex,red!60!black] (bob) -- ++(1.5,0) node[right]{\color{black}$x_B$};
        \draw[double,-Latex,red!60!black] (charlie) -- ++(1.5,0) node[right]{\color{black}$x_C$};
        \end{scope}
    \end{tikzpicture}
    \caption{Cloning attack against a QECM. The parts depending on the cloning attack are outlined by a dotted box.}
    \label{fig:cloning-attack}
\end{figure}
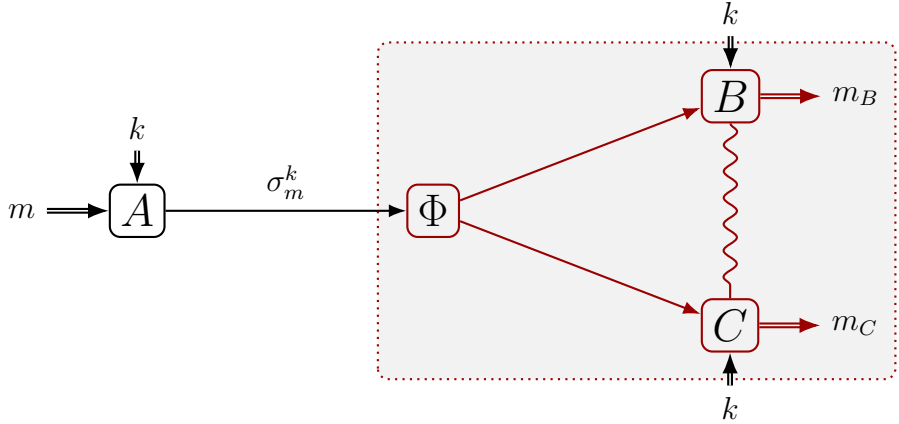

\paragraph{Summary of results} We show that there exists a quantum encryption of classical messages that is correct and uncloneable secure, which is our main result, \cref{thm:main-theorem}. We show this by proving that the quantum value (or the winning probability) of the $d$-dimensional two-outcome Haar measure game is $\frac{1}{2} + O\parens[\big]{\tfrac{\log(\log d)}{\log d}}$ (\cref{thm:D}). The proof holds as a consequence of applying the one-shot decoupling theorem (\cref{thm:decoupling} \cite{DBWR14}) to show that one cannot win significantly better than random guessing in the above type of monogamy-of-entanglement game, where the $O\parens[\big]{\tfrac{\log(\log d)}{\log d}}$ terms comes from the error term in the decoupling theorem (see \cref{sec:decoupling}). We also show how to achieve a correct and uncloneable-secure quantum encryption of classical messages with security $\widetilde{O}\parens*{\tfrac{1}{\lambda}}$ in the security parameter $\lambda$ efficiently by a construction with unitary \textit{t-}designs (\cref{thm: eff-const}). 

\cref{thm:main-theorem} is established as follows. First, we need that the success probability of a cloning attack (\cref{def:cloning-att}) is upper bounded by the winning probability of a two-outcome $d$-dimensional Haar measure game (\cref{lem:moe-bound}). Next, we establish that the winning probability of this game is $\frac{1}{2} + O\parens[\big]{\tfrac{\log(\log d)}{\log d}}$ (\cref{thm:D}) by contradiction. This is where the decoupling theorem plays a role as follows. We bound the value of the conditional min-entropy away from its minimum value in \cref{cor:entropy-bound}, which appears as an exponential bound in the decoupling inequality (\cref{thm:decoupling}). Furthermore, we contradict our initial assumption that there exists a shared state of Alice, Bob and Charlie that corresponds to a winning probability much greater than $\frac{1}{2}$ in the game by showing that its overlap with any other tripartite state where Alice and Charlie decouple must be very small (\cref{thm:inner-product-bound}). Then, by \emph{monogamy of entanglement}, Alice and Bob are not highly entangled. Yet, this overlap being small implies Alice's randomised measurements cause her system to be decoupled from Bob, due to which the probability of Bob correctly guessing Alice's measurement outcome is always low which negates our assumption. This argument is elaborated in \cref{sec:intuition}.

\paragraph{Outline} The rest of this paper is structured as: \cref{sec:intuition} elucidates the intuition behind the proof of the main result; \cref{sec:prelim} contains all necessary background to follow the main result; \cref{sec:result} states and proves \cref{thm:main-theorem}, the main result; \cref{sec:eff-const} presents the accompanying result of an efficient construction, \cref{thm: eff-const}; and finally, \cref{sec:outlook} concludes with an outlook and future directions.

\paragraph{Acknowledgements} We are grateful to Anne Broadbent, David Elkouss, and Yao-Ting Lin for helpful discussion and comments on a draft of the manuscript. AB thanks Debbie Leung for teaching her about decoupling. EC thanks everyone with whom he has discussed the uncloneable encryption problem in depth: Pierre Botteron, Srijita Kundu, S\'ebastien Lord, Arthur Mehta, Ion Nechita, Monica Nevins, Cl\'ement Pellegrini, Denis Rochette, Hadi Salmasian, and William Slofstra. Research at Perimeter Institute is supported in part by the Government of Canada through the Department of Innovation, Science, and Economic Development Canada and by the Province of Ontario through the Ministry of Colleges and Universities. EC is supported by a CGS D scholarship from NSERC.

\section{Main Idea} \label{sec:intuition}

The central notion which enables uncloneable encryption is decoupling. In a tripartite pure state $\phi_{ABC}$, we say that systems $A$ and $C$ \textit{decouple} if the reduced state of $\phi_{AC} = \phi_A \otimes \phi_C$ is a product state, in which case $A$ is purified by subsystem $B$. This means that $C$ does not provide any information on $A$, i.e., the outcome of any measurement of $A$ is statistically independent of the outcome of any measurement on $C$. A decoupling inequality provides a necessary and sufficient condition for which the unitary evolution of a system results in decoupling. 

The relevance of decoupling in showing the validity of uncloneable encryption lies in proving that the winning probability of a certain type of monogamy-of-entanglement game is sufficiently bounded. Monogamy of entanglement is the idea that in a tripartite system $ABC$, if $A$ and $B$ are highly entangled, then each of their reduced states with $C$ should be highly separable. In a monogamy-of-entanglement game between three parties Alice, Bob, and Charlie, this implies that both Bob and Charlie cannot simultaneously win with high probability. In other words, Alice's measurement outcome cannot be simultaneously correctly guessed by both Bob and Charlie. 

The precise monogamy-of-entanglement game useful in this setting is the Haar measure game as in \cref{def:haar-moe}. Here, Alice samples her measurement operators from a Haar distribution and measures her first qubit $A_1$, while Bob and Charlie simultaneously try to guess her measurement outcome correctly without interacting with each other. A simple strategy that Bob and Charlie could use is use a coordinated random guess, and this case corresponds to a winning probability of~$\frac{1}{2}$, which is the worst-case scenario\footnote{Note that Bob and Charlie could do worse than $\tfrac{1}{2}$ by making different guesses, but the value of the coordinated random guess can always be attained for any MoE game.}. 

Now, assume that they can do better, which means there exists a state $\psi_{ABC}$ that fares well in this game such that its winning probability is well above $\frac{1}{2}$. Then, the overlap between $\psi_{ABC}$ and any state $\sigma_{ABC}$ where systems $A$ and $C$ decouple is very small. Note that, by symmetry between Bob and Charlie, we could alternately begin with a state where $A$ and $B$ decouple (see further discussion in~\cref{sec:outlook}). By monogamy of entanglement, this would mean that the overlap of $\psi_{ABC}$ and $\sigma_{ABC}$ is also very small when systems $A$ and $B$ are maximally entangled. The decoupling inequality on $\psi_{ABC}$ in this context says that the amount of \textit{independent randomness} (not shared) that can be extracted from $A$ depends on how far away systems $A$ and $B$ are from being maximally entangled, quantified by the overlap with a maximally entangled state. Since we know that this overlap is small for $\psi_{ABC}$, Alice's randomised measurement must result in her system being decoupled from $B$. In the context of decoupling, this measurement is equivalent to Alice applying a Haar-random unitary to her system and then tracing out a subsystem $A_2$ corresponding to all but the first qubit. She then measures the first qubit $A_1$. By virtue of $A_1$ decoupling from $B$, the probability of Bob correctly guessing Alice's measurement outcome is always low. However, our assumption implies the contrary in that Bob must win with probability much higher that $\frac{1}{2}$. This is a contradiction. Hence, there cannot exist a state like $\psi_{ABC}$. 

The relation of the above argument to uncloneable encryption lies in the equivalence between the entanglement-based and the prepare-and-measure picture of the scheme. On one hand, the Haar measure game induces a quantum encryption of classical messages (\cref{def:qecm}) wherein Alice prepares message states in a randomly sampled basis instead of measuring. On the other hand, by the Choi-Jamio\l{}kowski isomorphism, any cloning attack against this quantum encryption of classical messages induces a strategy for the Haar measure game. Since we proved that the winning probability of the Haar measure game is very close to $\frac{1}{2}$ this implies that Alice's encryption scheme is secure against cloning. 

A schematic of this argument is given in \cref{fig:decoupling-implies-uncloneable}.

\begin{figure}
    \centering
    \begin{subfigure}{0.45\textwidth}
        \centering
        \begin{tikzpicture}[scale=0.75]
        \begin{scope}[thick]
            \node (rho) {$\phi_{ABC}$};
            \node (alice)[shape=rectangle,draw=black,rounded corners,left=2 of rho] {\Large$A$};
            \node (bob)[shape=rectangle,draw=red!60!black,rounded corners,above right=1.73 and 1 of rho] {\Large$B$};
            \node (charlie)[shape=rectangle,draw=red!60!black,rounded corners,below right=1.73 and 1 of rho] {\Large$C$};
            \draw[red!60!black,decorate,decoration=snake] (alice) -- (rho) node[above,pos=0.6]{\scalebox{0.65}[0.7]{\color{black}$\mc{U}(\!\ketbra{x}_{A_1}\!\!\!\otimes\!\mds{1}_{A_2}\!)U^\dag$}} (bob) -- (rho) (charlie) -- (rho);
            \draw[double,Latex-] (alice) -- ++(0,1.5) node[left,pos=0.5]{\scalebox{0.9}{$U$}};
            \draw[double,-Latex] (alice) -- ++(-1,0) node[left]{$x$};
            \draw[double,Latex-] (bob) -- ++(0,1) node[above]{$U$};
            \draw[double,-Latex,red!60!black] (bob) -- ++(1,0) node[right]{\color{black}$x_B$};
            \draw[double,Latex-] (charlie) -- ++(0,-1) node[below]{$U$};
            \draw[double,-Latex,red!60!black] (charlie) -- ++(1,0) node[right]{\color{black}$x_C$};
    
            \draw[decorate, decoration={random steps,segment length=1mm,amplitude=0.6mm},inner color=black!30,outer color=white,draw=white!0] (alice) ++(0,2) circle (0.7cm) node{\color{black}$\mc{U}(d)$};
        \end{scope}
        \end{tikzpicture}
        \caption{The Haar measure MoE game}
        \label{fig:moe-game}
    \end{subfigure}
    \begin{subfigure}{0.45\textwidth}
        \centering
        \begin{tikzpicture}[scale=0.75]
        \begin{scope}[thick]
            \node (alice) {\Large$A$};
            \node (bob) [below=2 of alice] {\Large$B$};
            \node (charlie) [below=2 of bob] {\Large$C$};
            \node (alice2) [below right=-0.6 and 3 of alice] {\Large$A_1$};
            \node (bob2) [right=3 of bob] {\Large$B$};
            \node (charlie2) [right=3 of charlie] {\Large$C$};
            \node (u)[shape=rectangle,draw=black,rounded corners,right=1 of alice] {\large$U$};
            \draw[orange!90!black,decorate,decoration=snake] (bob) ++(-2,0) -- (alice);
            \draw[yellow!80!black,decorate,decoration=snake] (bob) ++(-2,0) -- (bob);
            \draw[violet!90!black,decorate,decoration=snake] (bob) ++(-2,0) -- (charlie);
            \fill (bob) ++(-2,0) circle (3pt) node[above left]{$\phi_{ABC}$};
            \draw (bob) -- (bob2) (charlie) -- (charlie2) (alice) -- (u) (alice2) -- ++(-2.4,0);
            \draw[-Latex] (u) -- ++(2.5,1) node[above,pos=0.5,sloped]{$A_2$} node[right]{$\Tr$};
            \draw[green!60!black,decorate,decoration=snake] (bob2) -- (charlie2);
            \draw[blue!60!black,decorate,decoration=snake] (alice2) to[bend left=40] (charlie2);
            
            \draw[white, opacity=0] (charlie) -- ++(0,-1.4);

            \draw[double,Latex-] (u) -- ++(0,1.5);
            \draw[decorate, decoration={random steps,segment length=1mm,amplitude=0.6mm},inner color=black!30,outer color=white,draw=white!0] (u) ++(0,2) circle (0.7cm) node{\color{black}$\mc{U}(d)$};
        \end{scope}
        \end{tikzpicture}
        \caption{The decoupling theorem.}
        \label{fig:decoupling}
    \end{subfigure}

    \begin{subfigure}{\textwidth}
        \centering
        \begin{tikzpicture}
            \draw[line width=0.5cm,red,opacity=0.4] (-0.9,-0.9) -- (0.9,0.9);
            \node (phi) {\Large$\phi_{ABC}$};
            \draw[line width=0.5cm,red!40!white] (phi) circle (1.3cm);
            
            \draw[line width=0.3cm,green,opacity=0.4] (phi) ++(4,0.5) node{\huge\ding{51}} circle (0.75cm);
            \draw[line width=0.3cm,green,opacity=0.4] (phi) ++(-4,0.5) node{\huge\ding{51}} circle (0.75cm);
            \draw[thick,-Latex] (phi) ++(3,0.3) -- ++(-1.4,-0.3);
            \draw[thick,-Latex] (phi) ++(-3,0.3) -- ++(1.4,-0.3);
        \end{tikzpicture}
        \caption{Uncloneable encryption exists}
        \label{fig:exists}
    \end{subfigure}
    \caption{Decoupling implies the existence of uncloneable encryption. It may seem like specific structure on $U$ and $A_2$ is needed for decoupling to be applicable. However, this is not so, as the only information required is that $U$ is Haar random which is true by construction, and that the size of $A_2$ is sufficiently large compared to $A_1$ which naturally arises in decoupling.}
    \label{fig:decoupling-implies-uncloneable}
\end{figure}
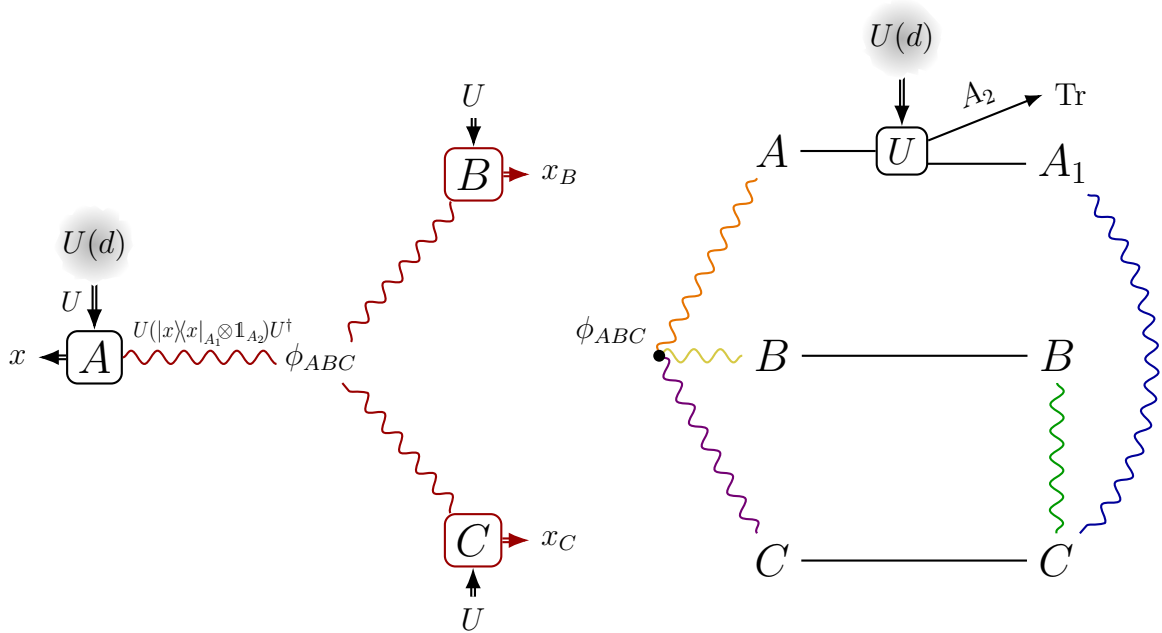
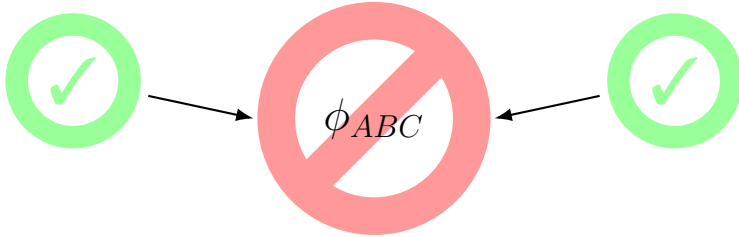

\section{Preliminaries} \label{sec:prelim}
    
\subsection{Notation} \label{sec: notation}

For $n\in\N$, write $[n]=\{1,2,\ldots,n\}$. We write $\log$ for the base-$2$ logarithm. For functions $f,g:\N\rightarrow\R_{\geq 0}$, we say $f(\lambda)=O(g(\lambda))$ if $\lim \limits_{\lambda\rightarrow\infty}\frac{f(\lambda)}{g(\lambda)}<\infty$; and $f(\lambda)=\widetilde{O}(g(\lambda))$ if $f(\lambda)=O(g(\lambda)\log (\lambda)^c)$ for some $c\in\R$. We say a function $f$ is negligible if $\lim \limits_{\lambda\rightarrow\infty}\lambda^n f(\lambda)=0$ for all $n\in\N$.

We denote registers by uppercase Latin letters $A,B,C,\ldots$; and we denote Hilbert spaces by uppercase script letters $\mc{H},\mc{K},\mc{L},\ldots$. We always assume registers are finite sets and Hilbert spaces are finite-dimensional. We denote an independent copy of a register $A$ by $A'$. Given a register $A$, the Hilbert space spanned by $A$ is $\mc{H}_A=\spn\!\!\set*{\ket{a}}{a\in A}\cong\C^{|A|}$. We indicate that an operator or vector is on register $A$ with a subscript $A$, omitting when clear from context. Given two registers $A$ and $B$, we write $AB$ for their cartesian product, and treat the isomorphism $\mc{H}_{AB}\cong\mc{H}_A\otimes\mc{H}_B$ implicitly. Given finite-dimensional Hilbert spaces $\mc{H}$ and $\mc{K}$, we write $B(\mc{H},\mc{K})$ for the set of all linear operators $\mc{H}\rightarrow\mc{K}$, $B(\mc{H})=B(\mc{H},\mc{H})$, $\mc{U}(\mc{H})\subseteq B(\mc{H})$ for the subset of unitary operators, and $D(\mc{H})\subseteq B(\mc{H})$ for the subset of density operators. Write $\mc{U}(d)=\mc{U}(\C^{d})$. We write $\Tr$ for the trace on $B(\mc{H})$. On $B(\mc{H}_{AB})$, we write the partial trace $\Tr_{B}=\id\otimes\Tr$. For $\rho_{AB}\in B(\mc{H}_{AB})$, write $\rho_A=\Tr_B(\rho_{AB})$. We denote the $1$-norm by $\norm{\cdot}_1$ and the trace norm by $\norm{\cdot}_{\Tr}=\frac{1}{2}\norm{\cdot}_1$.

We denote the canonical maximally-entangled state $\ket{\phi^+}_{AA'}=\frac{1}{\sqrt{|A|}}\sum_{a\in A}\ket{a}\otimes\ket{a}\in\mc{H}_{AA'}$. We write the maximally-mixed state on a register $A$ as $\omega_A=\frac{1}{|A|}\sum_{a\in A}\ketbra{a}\in D(\mc{H}_A)$. 

A positive-operator-valued measurement (POVM) is a finite set of positive operators $\{P_i\}_{i\in I}$ such that $\sum_iP_i=\mds{1}$, and a projection-valued measurement (PVM) is a POVM where all the elements are projectors. A quantum channel is a completely positive trace-preserving (CPTP) map $\Phi:B(\mc{H})\rightarrow B(\mc{K})$. We denote the Choi-Jamio\l{}kowski isomorphism $J:B(B(\mc{H}_A),B(\mc{H}_B))\rightarrow B(\mc{H}_{AB})$, $J(\Phi)=(\id\otimes\Phi)(\ketbra{\phi^+}_{AA'})$. Note that if is $\Phi$ is a quantum channel, $J(\Phi)\in D(\mc{H}_{AB})$, called the Choi state.

For a complex-valued random variable $X$, we write its expectation as $\expec X=\expec_XX$, and its variance as $\varsigma_X^2=\expec |X|^2-\abs*{\expec X}^2$. We make use of the Haar measure on the unitary group, which is the unique invariant Borel probability measure on $\mc{U}(\mc{H})$, for $\mc{H}$ a finite-dimensional Hilbert space. We denote it $\mu_{\mc{U}(\mc{H})}$. Given a function $f$ with domain $\mc{U}(\mc{H})$, we interchangeably write $\expec_U f(U)=\int f(U)dU$ for the expectation with respect to the Haar measure.

%\subsection{Operator monotonicity}\label{sec:operator-monotonicity}

%Let $S\subseteq\R$, and $f:S\rightarrow\R$ be a function. Given a hermitian operator $A$ with spectral decomposition $A=\sum_i\lambda_i\ketbra{v_i}$ such that the spectrum $\sigma(A)=\set{\lambda_i}\subseteq S$, define the operator
%$$f(A)=\sum_if(\lambda_i)\ketbra{v_i}.$$
%We say $f$ is an \emph{operator monotone} if, whenever $A\leq B$ such that $\sigma(A),\sigma(B)\subseteq S$, then $f(A)\leq f(B)$. Important examples of operator monotone functions are $t\mapsto\log t$ \cite[Example 3.6]{Cha15} and $t\mapsto-\frac{1}{t}$ \cite[Proposition 2.2]{Cha15} on $S=(0,\infty)$. However, not all monotone functions are operator monotone, for example $t\rightarrow t^p$ for $p>1$, on $(0,\infty)$. However, compositions of operator monotone functions are operator monotone. We make use of the following property of operator monotonicity: if $f:[0,L)\rightarrow\R$ is operator monotone, $A$ is such that $\sigma(A)\subseteq[0,L)$, and $P$ is a projector, then $Pf(A)P\leq f(PAP)$. See~\cite{Cha15} for a survey of operator monotonicity.

\subsection{Representation theory}\label{sec:representations}

Let $G$ be a finite or a compact topological group. A \emph{unitary representation} of $G$ on a finite-dimensional Hilbert space $\mc{H}$ is a group homomorphism $\pi:G\rightarrow \mc{U}(\mc{H})$. If $G$ is a topological group, we will also require that $\pi$ be continuous. An \emph{intertwiner} from a representation $\pi:G\rightarrow \mc{U}(\mc{H})$ to a representation $\chi:G\rightarrow \mc{U}(\mc{K})$ is an operator $T\in B(\mc{H},\mc{K})$ such that $\chi(g)T=T\pi(g)$ for all $g\in G$. A natural way to construct intertwiners is by means of the Haar measure on $G$, $\mu_G$. In fact, if $T\in B(\mc{H},\mc{K})$,
$$\int\chi(g)T\pi(g)^\dag d\mu_G(g)$$
is always an intertwiner. We say two representations are \emph{equivalent} if there is an invertible intertwiner between them and write $\pi\simeq\chi$. An \emph{irreducible representation} is a representation whose action on $\mc{H}$ leaves no subspace but $\mc{H}$ and $0$ invariant. By Schur's lemma, the intertwiners between inequivalent irreducible representations are $0$ and the intertwiners from an irreducible representation to itself are multiples of identity. For finite groups (Maschke's theorem) or compact topological groups (Peter-Weyl theorem), every representation decomposes as a direct sum of irreducibles, \emph{i.e.} given a representation $\pi:G\rightarrow \mc{U}(\mc{H})$ there exists an equivalence $\mc{H}\rightarrow\bigoplus_i\mc{H}_i\otimes\mc{K}_i$ such that $\pi\simeq\bigoplus_i\pi_i\otimes \mds{1}$, where the $\pi_i:G\rightarrow \mc{U}(\mc{H}_i)$ are inequivalent irreducible representations. The intertwiners from $\pi$ to itself then have the form $T=\bigoplus_i\mds{1}\otimes T_i$ for some $T_i\in B(\mc{K}_i)$.

We work with representations of the unitary group on a finite-dimensional Hilbert space $\mc{H}$. Fix a basis $\set*{\ket{i}}{i=1,\ldots,d}$ of $\mc{H}$. The \emph{trivial representation} is the mapping $\mc{U}(\mc{H})\rightarrow S^1$, $U\mapsto 1$; the \emph{fundamental representation} is the identity mapping $\mc{U}(\mc{H})\rightarrow \mc{U}(\mc{H})$; and the \emph{contragredient representation} is the mapping $\mc{U}(\mc{H})\rightarrow \mc{U}(\mc{H})$, $U\mapsto\bar{U}$, where the complex conjugate is with respect to the fixed basis. These are all irreducible representations, and inequivalent for $d>2$. 

\begin{lemma} [\cite{Mel24}] \label{lem: uni-irrep}
    Consider the representation $\pi:\mathcal{U}(\mathcal{H})\rightarrow \mathcal{U}(\mathcal{H}\otimes\mathcal{H})$, $U\mapsto U\otimes\bar{U}$. Then, any intertwiner $T$ of $\pi$ can be expressed as $T=\alpha|\phi^+\rangle\!\langle \phi^+|+\beta\Pi$,
    where $\Pi={I}-|\phi^+\rangle\!\langle \phi^+|$ is the orthogonal projector onto $\mathcal{K}=|\phi^+\rangle^\perp$, where $|\phi^+\rangle\in\mathcal{H}\otimes\mathcal{H}$ is the maximally entangled state.\footnote{This is equivalent to the decomposition as the direct sum of two irreducible representations: the trivial representation on the subspace $\mathrm{span}\{|\phi^+\rangle\}$ and an irreducible representation on $\mathcal{K}=|\phi^+\rangle^\perp$.} In particular, by orthogonality of the projectors,
    \begin{align}
        \int(U\otimes\bar{U})T(U\otimes\bar{U})^\dag dU=\frac{\mathrm{Tr}(\Pi T)}{d^2-1}\Pi+\langle{\phi^+}|{T}|{\phi^+}\rangle|\phi^+\rangle\!\langle \phi^+|.
    \end{align}
\end{lemma}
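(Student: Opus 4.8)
The plan is to prove \cref{lem: uni-irrep} by first establishing the claimed form of an arbitrary intertwiner via representation-theoretic structure, and then using that form together with the invariance property of the Haar average to derive the displayed twirling formula. The key observation is the one stated in the footnote: the representation $U\mapsto U\otimes\bar U$ decomposes into exactly two inequivalent irreducible subrepresentations, namely the trivial representation supported on $\spn\set{\ket{\phi^+}}{}$ and a second irreducible piece supported on $\mc{K}=\ket{\phi^+}^\perp$.

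First I would verify that $\ket{\phi^+}$ spans an invariant subspace carrying the trivial representation. This follows from the well-known transpose-trick identity $(U\otimes\bar U)\ket{\phi^+}=\ket{\phi^+}$, which one checks directly from the definition $\ket{\phi^+}=\frac{1}{\sqrt d}\sum_i\ket{i}\otimes\ket{i}$ by expanding $U\ket{i}\otimes\bar U\ket{i}$ and using $\sum_i U\ket{i}\bra{i}U^\dag$-type resummation; equivalently, $(U\otimes\bar U)\ket{\phi^+}=\vectorize(U\mds{1}U^\dag)/\sqrt d=\vectorize(\mds{1})/\sqrt d=\ket{\phi^+}$. Hence $\spn\set{\ket{\phi^+}}{}$ is invariant and the action on it is trivial. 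Since the representation is unitary, the orthogonal complement $\mc{K}$ is also invariant, so $\Pi=\mds{1}-\ketbra{\phi^+}$ is the projector onto a subrepresentation. The step I expect to be the main obstacle is showing that the action on $\mc{K}$ is \emph{irreducible}, so that Schur's lemma forces every self-intertwiner to act as a scalar on $\mc{K}$. For $d>2$ this is a standard fact from the representation theory of $\mc{U}(d)$ (the decomposition of $\square\otimes\bar\square$ into the trivial and the adjoint-type irreducible), and I would invoke it directly, citing the structure recalled in \cref{sec:representations}; alternatively one can prove irreducibility by computing that the commutant of $\set{U\otimes\bar U}{U\in\mc{U}(\mc{H})}$ is two-dimensional, which is exactly what the conclusion asserts.

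Granting the decomposition, the form of the intertwiner is immediate from Schur's lemma as stated in \cref{sec:representations}: an operator $T$ intertwines $\pi$ with itself if and only if it is a direct sum of scalars over the isotypic components. Because both irreducibles here have multiplicity one, the self-intertwiners are precisely operators of the form $T=\alpha\ketbra{\phi^+}+\beta\Pi$ for scalars $\alpha,\beta\in\C$. This establishes the first claim of the lemma.

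For the displayed twirling formula, I would argue that the Haar average $\bar T=\int(U\otimes\bar U)T(U\otimes\bar U)^\dag\,dU$ is always a self-intertwiner of $\pi$, by the standard invariance argument recalled in \cref{sec:representations} (left-translating the integration variable and using invariance of the Haar measure). Therefore, by the first part, $\bar T=\alpha\ketbra{\phi^+}+\beta\Pi$ for some $\alpha,\beta$, and it remains only to identify these coefficients. I would pin down $\alpha=\bra{\phi^+}\bar T\ket{\phi^+}$ and $\beta=\frac{1}{d^2-1}\Tr(\Pi\bar T)$ by taking the appropriate matrix element and trace against $\Pi$ (using $\Tr(\Pi\ketbra{\phi^+})=0$ and $\Tr(\Pi)=d^2-1$). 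Finally, since averaging preserves both $\bra{\phi^+}\cdot\ket{\phi^+}$ and $\Tr(\Pi\,\cdot\,)$ — the former because $(U\otimes\bar U)\ket{\phi^+}=\ket{\phi^+}$ and the latter because $\Pi$ commutes with $U\otimes\bar U$ and the trace is cyclic and unitarily invariant — one has $\bra{\phi^+}\bar T\ket{\phi^+}=\bra{\phi^+}T\ket{\phi^+}$ and $\Tr(\Pi\bar T)=\Tr(\Pi T)$, which gives exactly the stated expression. The only care needed is the bookkeeping of these invariances, which is routine once the intertwiner form is in hand.
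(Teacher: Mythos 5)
Your proposal is correct and takes essentially the same route as the paper, which does not prove this lemma itself but cites \cite{Mel24} and sketches in its footnote exactly your argument: the trivial representation on $\mathrm{span}\{\ket{\phi^+}\}$, an irreducible representation on $\mathcal{K}=\ket{\phi^+}^\perp$, Schur's lemma for the intertwiner form, and Haar invariance of the twirl to identify the coefficients. The one step you defer---irreducibility on $\mathcal{K}$---is likewise delegated to the citation by the paper; if you want it self-contained, establish it via the character computation $\int\abs{\Tr U}^4\,dU=2$ (two irreducible components, each with multiplicity one) rather than by ``computing that the commutant is two-dimensional,'' which as phrased is the conclusion itself.
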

For more details on the representation theory of the unitary group, see for example~\cite{Mel24}.

\subsection{Unitary \textit{t}-designs}

Unitary $t$-designs give a way to replace the Haar measure over the unitary group by a finitely-supported measure.

\begin{definition}
    Let $\mc{H}$ be a finite-dimensional Hilbert space and $t\in\N$. A \emph{unitary $t$-design} on $\mc{H}$ is a finite subset $\mc{V} \subseteq \mc{U}(\mc{H})$ such that
    \begin{align}
        \frac{1}{|\mc{V}|}\sum_{U\in \mc{V}}U^{\otimes t}\otimes\bar{U}^{\otimes t}=\int U^{\otimes t}\otimes\bar{U}^{\otimes t} dU.
    \end{align}
\end{definition}

For any function $p:\mc{U}(\mc{H})\rightarrow\C$ such that $p(U)$ is a degree-$t$ polynomial in the matrix elements of $U$ and $\bar{U}$, the $t$-design property implies that $\frac{1}{|\mc{V}|}\sum_{U\in \mc{V}}p(U)=\expec_Up(U)$.

In~\cite{DLT02}, it was shown that the Clifford group induces a $2$-design. Further,~\cite{DCEL09,CLLW16} show that $2$-designs can be efficiently implemented.

\subsection{Entropies}

We recall that the von Neumann entropy of a state $\rho \in D(\mc{H})$ is defined as $H(\rho) = -\Tr(\rho \log \rho)$. The conditional von Neumann entropy of A given B for a bipartite state $\rho_{AB} \in D(\mc{H}_A \otimes \mc{H}_B)$ is defined as $H(A \vert B)_{\rho} = H(AB)_{\rho} - H(B)_{\rho}$. In this work, we shall be using the conditional min-entropy denoted by $H_{\min} (A \vert B)_{\rho}$ defined below.

\begin{definition} \label{def: min-ent}
    Let $\rho_{AB} \in D(\mc{H}_{AB})$. The conditional min-entropy of A given B is defined as
    \begin{equation}
        H_{\min} (A \vert B)_{\rho} = \sup \limits_{\sigma_B \in D(\mc{H}_B)} \sup \{ \lambda \in \mbb{R} : 2^{-\lambda} \cdot \mathds{1}_A \otimes \sigma_B \geq \rho_{AB} \}.
    \end{equation}
\end{definition}

These entropies are related as $H_{\min} (A \vert B)_{\rho} \leq H (A \vert B)_{\rho}$ for all $\rho \in D(\mc{H}_{AB})$ \cite[Lemma 2]{TCR09}. Operationally, the conditional min-entropy quantifies how close the state $\rho_{AB}$ can be brought to a maximally entangled state on the bipartite system $AB$ using only local quantum operations on system $B$.

There is an alternate operational interpretation of the min-entropy that is crucial to this work. The min-entropy is the \textit{maximum achievable ebit fraction} \cite{KRS09}, where an ebit\footnote{An "ebit" is the maximally entangled two-qubit state $\frac{1}{\sqrt{2}}(\ket{00} + \ket{11})$, but here an ebit refers to any general maximally entangled state $\ket{\phi^+}_{AA'}$. This is justified as the quantity $|A| \max \limits_{\mc{E}} F \left((\text{id}_A \otimes \mc{E})(\rho_{AB}), \ketbra{\phi^+} \right)^2$ takes the same value independent of the choice of maximally entangled state $\ket{\phi^+}_{AA'}$ as originally formulated \cite{KRS09}.} is the maximally entangled state defined by 

\begin{equation} \label{eq: max-ent-state}
    \ket{\phi^+}_{AA'}=\frac{1}{\sqrt{|A|}}\sum_{a\in A}\ket{a}\otimes\ket{a}\in\mc{H}_{AA'}.
\end{equation} 

The maximum overlap of a state $\rho_{AB}$ with an ebit that can be achieved by local quantum operations (CPTP maps) $\mc{E}: B(\mc{H}_B) \to B(\mc{H}_{A'}), ~\mc{H}_{A'} \cong \mc{H}_{A}$ on subsystem B is the quantity $|A| \max \limits_{\mc{E}} F \left((\text{id}_A \otimes \mc{E})(\rho_{AB}), \ketbra{\phi^+} \right)^2$, interpreted as the amount of quantum correlation between $A$ and $B$. Here, $F$ is the fidelity between quantum states $\rho$ and $\sigma$ denoted by $F (\rho, \sigma)$ defined as $$F (\rho, \sigma) \coloneqq \left\Vert \sqrt{\rho} \sqrt{\sigma} \right\Vert_1.$$

The conditional min-entropy $H_{\min}(A \vert B)_{\rho} = - \log |A| \max \limits_{\mc{E}} F \left((\text{id}_A \otimes \mc{E})(\rho_{AB}), \ketbra{\phi^+} \right)^2$ is then interpreted as the negative logarithm of the quantum correlation between $A$ and $B$ or the maximum achievable ebit fraction. This idea is formalised in the following theorem by K\"onig, Renner and Schaffner.

\begin{lemma} \cite[Theorem 2]{KRS09} \label{lem: op-int-min-ent}
    The min-entropy of a state $\rho_{AB} \in D(\mc{H}_A \otimes \mc{H}_B$ can be expressed as 
    \begin{equation}
        H_{\min}(A \vert B)_{\rho} = - \log |A| \max \limits_{\mc{E}} F \left((\text{id}_A \otimes \mc{E})(\rho_{AB}), \ketbra{\phi^+} \right)^2,
    \end{equation}
    with maximum taken over all quantum channels $\mc{E}: B(\mc{H}_B) \to B(\mc{H}_{A'}), ~\mc{H}_{A'} \cong \mc{H}_{A}$ and $\ket{\phi^+}_{AA'}$ defined by \eqref{eq: max-ent-state}.
\end{lemma}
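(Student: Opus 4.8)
The plan is to establish the identity by combining semidefinite programming (SDP) duality with the Choi--Jamio\l{}kowski isomorphism. First I would simplify the right-hand side. Because $\ketbra{\phi^+}$ is a pure state, the fidelity collapses to an overlap: for any $\tau\in D(\mc{H}_{AA'})$ one has $F(\tau,\ketbra{\phi^+})^2=\langle\phi^+|\tau|\phi^+\rangle$. Hence the object to control is $|A|\max_{\mc{E}}\langle\phi^+|(\text{id}_A\otimes\mc{E})(\rho_{AB})|\phi^+\rangle$, a linear functional of $\mc{E}$ maximized over quantum channels. On the other side, I would rewrite the min-entropy of \cref{def: min-ent} as the value of an SDP: absorbing the scalar $2^{-\lambda}$ into an unnormalized operator variable turns the double supremum into
\begin{equation}
    2^{-H_{\min}(A\vert B)_\rho}=\min\left\{\Tr(\sigma_B):\sigma_B\geq 0,\ \mds{1}_A\otimes\sigma_B\geq\rho_{AB}\right\},
\end{equation}
where $\sigma_B\geq0$ is in fact forced by the operator inequality.

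Next I would form the Lagrange dual of this primal SDP. Introducing a positive multiplier $X_{AB}\geq 0$ for the constraint $\mds{1}_A\otimes\sigma_B-\rho_{AB}\geq 0$, the Lagrangian reorganizes to $\Tr\!\big((\mds{1}_B-\Tr_A X_{AB})\sigma_B\big)+\Tr(X_{AB}\rho_{AB})$, so the inner minimization over $\sigma_B$ is finite only when the coefficient of $\sigma_B$ vanishes, forcing the dual constraint $\Tr_A(X_{AB})=\mds{1}_B$ and leaving objective $\Tr(X_{AB}\rho_{AB})$. This gives
\begin{equation}
    2^{-H_{\min}(A\vert B)_\rho}=\max\left\{\Tr(X_{AB}\rho_{AB}):X_{AB}\geq 0,\ \Tr_A(X_{AB})=\mds{1}_B\right\},
\end{equation}
where strong duality must be justified via Slater's condition: taking $\sigma_B$ a large enough multiple of the identity makes the primal strictly feasible, so the two values coincide and both optima are attained by compactness.

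Finally I would identify the dual feasible set with channels. The operators satisfying $X_{AB}\geq 0$ and $\Tr_A(X_{AB})=\mds{1}_B$ are exactly the (unnormalized) Choi operators of CPTP maps $\mc{E}:B(\mc{H}_B)\to B(\mc{H}_{A'})$. Writing $\mc{E}$ through Kraus operators $\{E_k\}$ with $\sum_k E_k^\dagger E_k=\mds{1}_B$ and setting $X_{AB}=|A|\sum_k\big((\mds{1}_A\otimes E_k^\dagger)\ket{\phi^+}\big)\big(\bra{\phi^+}(\mds{1}_A\otimes E_k)\big)$, a direct computation using the transpose (ricochet) identity on $\ket{\phi^+}$ shows that $\Tr(X_{AB}\rho_{AB})=|A|\langle\phi^+|(\text{id}_A\otimes\mc{E})(\rho_{AB})|\phi^+\rangle$, and that trace preservation $\sum_k E_k^\dagger E_k=\mds{1}_B$ is equivalent to $\Tr_A(X_{AB})=\mds{1}_B$. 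The factor $|A|$ arising from the normalization of $\ket{\phi^+}$ is precisely the prefactor in the statement. Combining the two displayed equalities with the fidelity reduction and taking $-\log$ yields the claim.

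The hard part will be the duality step: one must set up the cone and Lagrangian correctly, and crucially verify Slater's condition so that \emph{strong} duality (not merely weak duality) holds, since only then do the min and max values coincide. The remaining work is the Choi bookkeeping, where one has to track bases and transpose conventions carefully so that the dual objective matches $|A|$ times the fidelity squared and the dual constraint matches trace preservation exactly; the endpoint reductions are routine, and the content lies in matching the two optimization problems term for term.
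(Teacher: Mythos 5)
Your proposal is correct: the fidelity-to-overlap reduction, the SDP reformulation of \cref{def: min-ent}, the Lagrangian dual with Slater's condition, and the identification of dual-feasible operators $X_{AB}\geq 0$, $\Tr_A(X_{AB})=\mds{1}_B$ with unnormalized Choi operators of channels $B(\mc{H}_B)\to B(\mc{H}_{A'})$ all go through as you describe. Note that the paper offers no proof of its own here — it imports the statement as \cite[Theorem 2]{KRS09} — and your argument is essentially a reconstruction of the SDP-duality-plus-Choi proof given in that reference, so it matches the source's approach.
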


\subsection{Uncloneable encryption}\label{sec:uncloneable-encryption}

\begin{definition} \label{def:qecm}
    \begin{itemize}
    \item A \emph{quantum encryption of classical messages (QECM)} is given by a tuple $\ttt{Q}=(K,X,A,\mu,\{\sigma^k_x\}_{k\in K,x\in X})$, where
    \begin{itemize}
        \item $K$ is a set, representing the encryption keys;
        \item $X$ is a finite set, representing the messages;
        \item $A$ is a register, representing the system holding the encrypted messages;
        \item $\mu$ is a probability measure on $K$, representing the key distribution;
        \item $\sigma^k_x\in D(\mc{H}_A)$ is a quantum state, representing the encryption of message $x$ with key $k$.
    \end{itemize}
    
    \item We say a QECM is \emph{$\eta$-correct} if there exists a family of CPTP maps $\Phi^k:B(\mc{H}_A)\rightarrow B(\mc{H}_M)$, called decryption maps, such that for all $k\in K$ and $x\in M$,
    \begin{align*}
        \braket{x}{\Phi^k(\sigma^k_x)}{x}\geq\eta.
    \end{align*}
    
    \item We say that the QECM is \emph{correct} if it is $1$-correct.
    
    \item We say a family of QECMs $\{\ttt{Q}_\lambda\}_{\lambda\in\N}$ is an \emph{efficient QECM} if key sampling, encrypted message preparation, and decryption can be implemented in polynomial time in $\lambda$.
    \end{itemize}
\end{definition}

Note that correctness is equivalent to the orthogonality condition $\Tr(\sigma^k_{x}\sigma^k_{x'})=0$ for $k\in K$ and $x\neq x'\in X$.

\begin{definition}\label{def:haar-qecm}
    Let $d\in\N$ be even, let $A_1=\{0,1\}$, and $A_2=[d/2]$. Set $A=A_1A_2$. Let $\sigma_0=\frac{2}{d}\ketbra{0}\otimes \mds{1}\in D(\mc{H}_A)$ and $\sigma_1=\frac{2}{d}\ketbra{1}\otimes \mds{1}\in D(\mc{H}_A)$. The \emph{$d$-dimensional Haar-measure encryption of a bit} is the QECM $\ttt{Q}_{d,2}=(\mc{U}(\mc{H}_A),\{0,1\},A,\mu_{\mc{U}(\mc{H}_A)},\{U\sigma_xU^\dag\}_{U\in\mc{U}(\mc{H}_A),x\in\{0,1\}})$.
\end{definition}

\begin{definition} \label{def:cloning-att}
    A \emph{cloning attack} against a QECM $\ttt{Q}=(K,X,A,\mu,\{\sigma^k_x\}_{k\in K,x\in X})$ is a tuple $\ttt{A}=(B,C,\{B^k_x\}_{k\in K,x\in X},\{C^k_x\}_{k\in K,x\in X},\Phi)$, where
    \begin{itemize}
        \item $B$ and $C$ are registers, representing Bob and Charlie's systems, respectively;
        \item $\{B^k_x\}_{x\in X}\subseteq B(\mc{H}_B)$ and $\{C^k_x\}_{x\in X}\subseteq B(\mc{H}_C)$ are POVMs, representing Bob and Charlie's measurements given key $k$, respectively;
        \item $\Phi:B(\mc{H}_A)\rightarrow B(\mc{H}_{BC})$ is a CPTP map, representing the cloning channel.
    \end{itemize}
    The \emph{success probability} of $\ttt{A}$ against $\ttt{Q}$ is
    \begin{align}
        \mfk{c}(\ttt{Q},\ttt{A})=\int\frac{1}{|X|}\sum_{x\in X}\Tr\squ*{(B^k_x\otimes C^k_x)\Phi(\sigma^k_x)}d\mu(k).
    \end{align}
    The \emph{cloning value} of $\ttt{Q}$ is $\mfk{c}(\ttt{Q})=\sup_{\ttt{A}}\mfk{c}(\ttt{Q},\ttt{A})$, where the supremum is over all cloning attacks. We say a QECM is \emph{$\delta$-uncloneable secure} if $\mfk{c}(\ttt{Q})\leq\frac{1}{|X|}+\delta$.

    For a function $f:\N\rightarrow[0,1]$, we say a family of QECMs $\{\ttt{Q}_\lambda\}$ is \emph{$f$-uncloneable secure} if $\ttt{Q}_\lambda$ is $f(\lambda)$-uncloneable secure for all $\lambda$. We additionally say $\{\ttt{Q}_\lambda\}$ is \emph{uncloneable secure} if $\lim\limits_{\lambda\rightarrow\infty}f(\lambda)=0$; and $\{\ttt{Q}_\lambda\}$ is \emph{strongly uncloneable secure} if $f$ is a negligible function.
\end{definition}

\begin{definition}\label{def:cloning-distinguishing-att}
    A \emph{cloning-distinguishing attack} against a QECM $\ttt{Q}=(K,X,A,\mu,\{\sigma^k_x\}_{k\in K,x\in X})$ is a tuple $\ttt{A}=(\{x_0,x_1\},B,C,\{B^k_b\}_{k\in K,b\in\{0,1\}},\{C^k_b\}_{k\in K,b\in\{0,1\}},\Phi)$, where
    \begin{itemize}
        \item $x_0\neq x_1\in X$ are distinct messages, representing the two messages to be distinguished;
        \item $B$ and $C$ are registers, representing Bob and Charlie's systems, respectively;
        \item $\{B^k_b\}_{b\in\{0,1\}}\subseteq B(\mc{H}_B)$ and $\{C^k_b\}_{b\in\{0,1\}}\subseteq B(\mc{H}_C)$ are POVMs, representing Bob and Charlie's measurements given key $k$, respectively;
        \item $\Phi:B(\mc{H}_A)\rightarrow B(\mc{H}_{BC})$ is a CPTP map, representing the cloning channel.
    \end{itemize}
    The \emph{success probability} of $\ttt{A}$ against $\ttt{Q}$ is
    \begin{align}
        \mfk{cd}(\ttt{Q},\ttt{A})=\int\frac{1}{2}\sum_{b\in\{0,1\}}\Tr\squ*{(B^k_b\otimes C^k_b)\Phi(\sigma^k_{x_b})}d\mu(k).
    \end{align}
    The \emph{cloning-distinguishing value} of $\ttt{Q}$ is $\mfk{cd}(\ttt{Q})=\sup_{\ttt{A}}\mfk{cd}(\ttt{Q},\ttt{A})$, where the supremum is over all cloning-distinguishing attacks. We say a QECM is \emph{$\delta$-uncloneable-indistinguishable secure} if~$\mfk{cd}(\ttt{Q})\leq\frac{1}{2}+\delta$.

    For a function $f:\N\rightarrow[0,1]$, we say a family of QECMs $\{\ttt{Q}_\lambda\}$ is \emph{$f$-uncloneable-in\-dis\-ting\-ui\-sha\-ble secure} if $\ttt{Q}_\lambda$ is $f(\lambda)$-uncloneable-indistiguishable secure for all $\lambda$. We additionally say~$\{\ttt{Q}_\lambda\}$ is \emph{uncloneable-indistinguishable secure} if $\lim\limits_{\lambda\rightarrow\infty}f(\lambda)=0$; and $\{\ttt{Q}_\lambda\}$ is \emph{strongly uncloneable-indistinguishable secure} if $f$ is a negligible function.
\end{definition}

Note that if $X=\{0,1\}$, then the notions of uncloneable security and uncloneable-in\-dis\-ting\-ui\-sha\-ble security are equivalent. In general, uncloneable-indistinguishable security implies uncloneable security~\cite{BL20}. Also due to \cite{BL20}, uncloneable-indistinguishable security implies the indistinguishable security, a standard cryptographic notion. Further, due to~\cite{HKNY24}, an uncloneable-indistinguishable secure QECM with a one-bit message can be used to construct an uncloneable-indistinguishable secure QECM with arbitrary message size, under the assumption of quantum polynomial-time adversaries, and a primitive called decomposable quantum randomised encoding, which follows from the existence of one-way functions~\cite{BY22}. Hence, we concentrate on uncloneable security for QECMs with one-bit messages.

\subsection{Monogamy-of-entanglement games}\label{sec:moe}

\begin{definition}
    A \emph{monogamy-of-entanglement (MoE) game} is a tuple $\ttt{G}=(\Theta,X,A,\mu,\{A^\theta_x\}_{\theta\in\Theta,x\in X})$, where
    \begin{itemize}
        \item $\Theta$ is a set, representing the questions;
        \item $X$ is a finite set, representing the answers;
        \item $A$ is a register, representing Alice's system;
        \item $\mu$ is a probability measure on $\Theta$, representing the question distribution.
        \item $\{A^\theta_x\}_{x\in X}\subseteq B(\mc{H}_A)$ is a POVM, representing Alice's measurements given question $\theta$.
    \end{itemize}
    A \emph{strategy} for an MoE game $\ttt{G}$ is a tuple $\ttt{S}=(B,C,\{B^\theta_x\}_{\theta\in\Theta,x\in X},\{C^\theta_x\}_{\theta\in \Theta,x\in X},\rho_{ABC})$, where
    \begin{itemize}
        \item $B$ and $C$ are registers, representing Bob and Charlie's systems, respectively;
        \item $\{B^\theta_x\}_{x\in X}\subseteq B(\mc{H}_B)$ and $\{C^\theta_x\}_{x\in X}\subseteq B(\mc{H}_C)$ are POVMs, representing Bob and Charlie's measurements given question $\theta$, respectively;
        \item $\rho_{ABC}\in D(\mc{H}_{ABC})$ is the shared quantum state.
    \end{itemize}
    The \emph{winning probability} of $\ttt{S}$ at $\ttt{G}$ is
    \begin{align}
        \mfk{w}(\ttt{G},\ttt{S})=\int\sum_{x\in X}\Tr\squ*{(A^\theta_x\otimes B^\theta_x\otimes C^\theta_x)\rho_{ABC}}d\mu(\theta).
    \end{align}
    The \emph{quantum value} of $\ttt{G}$ is $\mfk{w}(\ttt{G})=\sup_{\ttt{S}}\mfk{w}(\ttt{G},\ttt{S})$, where the supremum is over all strategies.
\end{definition}

\begin{definition}\label{def:haar-moe}
    Let $d\in\N$ be even, let $A_1=\{0,1\}$, and $A_2=[d/2]$. Set $A=A_1A_2$. Let $\Pi_0=\ketbra{0}\otimes \mds{1}$ and $\Pi_1=\ketbra{1}\otimes \mds{1}$. The \emph{$d$-dimensional $2$-answer Haar-measure game} is the MoE game $\ttt{G}_{d,2}=(\mc{U}(\mc{H}_A),\{0,1\},A,\mu_{\mc{U}(\mc{H}_A)},\{U\Pi_xU^\dag\}_{U\in\mc{U}(\mc{H}_A),x\in\{0,1\}})$.
\end{definition}

\begin{lemma}\label{lem:moe-bound}
    Let $d\in\N$ be even. Then, $\mfk{c}(\ttt{Q}_{d,2})\leq\mfk{w}(\ttt{G}_{d,2})$.
\end{lemma}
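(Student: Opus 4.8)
The plan is to show that every cloning attack against $\ttt{Q}_{d,2}$ can be converted into a strategy for $\ttt{G}_{d,2}$ of at least the same value, via the Choi--Jamio\l{}kowski correspondence between the cloning channel and a shared entangled state; taking suprema then yields the claimed inequality. The starting point is the observation that the two primitives are built from the same operators: since $\sigma_x=\frac2d\Pi_x$, the encryption states satisfy $\sigma^U_x=U\sigma_xU^\dag=\frac2d\,U\Pi_xU^\dag=\frac2d A^U_x$, so that Alice's measurement operators in the game are, up to the factor $\frac2d$, exactly the ciphertexts.

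Concretely, given a cloning attack $\ttt{A}=(B,C,\{B^U_x\},\{C^U_x\},\Phi)$, I would define an MoE strategy $\ttt{S}$ by taking the shared state to be the Choi state $\rho_{ABC}=J(\Phi)=(\id_A\otimes\Phi)(\ketbra{\phi^+}_{AA'})$, where $\Phi$ acts on the $A'\cong A$ half of $\ket{\phi^+}$ and outputs $BC$; this is a valid density operator because $\Phi$ is trace-preserving. For Bob and Charlie's measurements I would set $\tilde B^U_x=B^{\bar U}_x$ and $\tilde C^U_x=C^{\bar U}_x$, which are genuine POVMs since $\bar U$ is unitary.

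The computation of $\mfk{w}(\ttt{G}_{d,2},\ttt{S})$ rests on the transpose (teleportation) identity $\Tr_A[(M_A\otimes\mds{1}_{A'})\ketbra{\phi^+}_{AA'}]=\frac1d M^T$. Applying it with $M=A^U_x=U\Pi_xU^\dag$, and using that $\Pi_x=\ketbra{x}\otimes\mds{1}$ is real and diagonal so that $(A^U_x)^T=\bar U\Pi_x\bar U^\dag$, the subnormalised post-measurement state of $BC$ conditioned on Alice's outcome $x$ becomes $\Phi\!\left(\frac1d\bar U\Pi_x\bar U^\dag\right)=\frac12\Phi(\sigma^{\bar U}_x)$. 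Substituting this into the winning-probability expression turns the summand into $\sum_x\Tr[(B^{\bar U}_x\otimes C^{\bar U}_x)\tfrac12\Phi(\sigma^{\bar U}_x)]$ inside the Haar integral over $U$.

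Finally, I would remove the conjugation by the change of variables $V=\bar U$, which leaves the Haar measure invariant because $U\mapsto\bar U$ is a measure-preserving automorphism of $\mc{U}(\mc{H}_A)$; this converts the integral into exactly $\mfk{c}(\ttt{Q}_{d,2},\ttt{A})$, giving the equality $\mfk{w}(\ttt{G}_{d,2},\ttt{S})=\mfk{c}(\ttt{Q}_{d,2},\ttt{A})$. Taking the supremum over all cloning attacks then yields $\mfk{c}(\ttt{Q}_{d,2})\leq\mfk{w}(\ttt{G}_{d,2})$. I expect the only delicate point to be the bookkeeping of the complex conjugate produced by the transpose trick, and recognising that invariance of the Haar measure under $U\mapsto\bar U$ is precisely what absorbs it; everything else is a direct manipulation of the definitions.
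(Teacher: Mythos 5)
Your proposal is correct and follows essentially the same route as the paper: the paper defers this lemma to \cite[Proposition 5.14]{Cul22}, whose argument is precisely the Choi--Jamio\l{}kowski construction you give, mapping the cloning channel to the shared state $J(\Phi)$ and absorbing the transpose-induced conjugation into the measurement labels via invariance of the Haar measure under $U\mapsto\bar U$. The bookkeeping of the factor $\tfrac{1}{d}(A^U_x)^T=\tfrac{1}{2}\sigma^{\bar U}_x$, which reproduces the $\tfrac{1}{|X|}$ normalisation in the cloning value, is handled correctly.
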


This result generalises to a very wide class of QECM schemes, see for example~\cite[Proposition 5.14]{Cul22}.

\begin{definition}
    Given a strategy $\ttt{S}$ for an MoE game $\ttt{G}$ with single bit answers $X=\{0,1\}$, the \emph{observable form} of the winning probability is
    \begin{align*}
        \mfk{w}(\ttt{G},\ttt{S})=\frac{1}{4}+\frac{1}{4}\int\braket{\psi}{A_\theta\otimes B_\theta\otimes I+A_\theta\otimes I\otimes C_\theta+I\otimes B_\theta\otimes C_\theta}{\psi}d\mu(\theta),
    \end{align*}
    where $A_\theta=A^\theta_0-A^\theta_1$, $B_\theta=B^\theta_0-B^\theta_1$, and $C_\theta=C^\theta_0-C^\theta_1$ are the observables corresponding to the measurements used in the game.
\end{definition}

When working with the observable form, it is useful to consider the \emph{game polynomial}
$$O=\frac{1}{4}\int A_\theta\otimes B_\theta\otimes I+A_\theta\otimes I\otimes C_\theta+I\otimes B_\theta\otimes C_\theta d\mu(\theta).$$
We can always work with symmetric strategies where $B_\theta=C_\theta$ for all $\theta\in\Theta$ and the shared state is preserved under exchange of systems $B$ and $C$. In this case, we make use of the modified game polynomial
$$O'=\frac{1}{4}\int 2A_\theta\otimes I\otimes B_\theta+I\otimes B_\theta\otimes B_\theta d\mu(\theta).$$

\subsection{Decoupling theorem}\label{sec:decoupling}

We recall the version of the one-shot decoupling inequality from~\cite{DBWR14} which holds in full generality in terms of the conditional smooth min-entropy $H^{\epsilon}_{\min}(A \vert B)_{\rho}$. We work in the setting where $\epsilon = 0$ and the RHS of the inequality is bounded in terms of the conditional min-entropy $H_{\min} (A \vert B)_{\rho}$. Note that in our restatement of the one-shot decoupling theorem, an additional $-1$ appears in the exponent on the RHS of \cref{eq: decoupling}, whereas in the original formulation \cite{DBWR14}, it doesn't. This is because \cite[Section 2.1]{DBWR14} uses a definition of the trace norm where they omit the factor of $\frac{1}{2}$ which normalises the trace norm, whereas we retain the factor of $\frac{1}{2}$ in our definition of the normalised trace norm (see \cref{sec: notation}).

\begin{theorem}[\cite{DBWR14}]\label{thm:decoupling}
    Let $\rho_{AE}\in D(\mc{H}_{AE})$ be a quantum state, and $\Phi:B(\mc{H}_A)\rightarrow B(\mc{H}_B)$ be a quantum channel. Then,
    \begin{align} \label{eq: decoupling}
    \int\norm*{(\Phi\otimes\id)\squ*{(U\otimes \mds{1})\rho_{AE}(U\otimes \mds{1})^\dag}-\tau_B\otimes\rho_E}_{\Tr}dU\leq2^{-\frac{1}{2}H_{\min}(A|E)_\rho-\frac{1}{2}H_{\min}(A|B)_{\tau}-1},
    \end{align}
    where $\tau_{AB}=J(\Phi)$ is the Choi state of $\Phi$.
\end{theorem}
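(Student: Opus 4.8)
The plan is to obtain this statement directly from the one-shot decoupling theorem proved in \cite{DBWR14}, since the only differences are a choice of smoothing parameter and a normalisation convention for the trace norm. First I would invoke the general statement of \cite{DBWR14}, which bounds the Haar average of $\norm{(\Phi\otimes\id)[(U\otimes\mds{1})\rho_{AE}(U\otimes\mds{1})^\dag]-\tau_B\otimes\rho_E}_1$ by $2^{-\frac{1}{2}H^{\epsilon}_{\min}(A|E)_\rho-\frac{1}{2}H^{\epsilon}_{\min}(A|B)_\tau}$ together with an $\epsilon$-dependent smoothing correction, where the norm there is the \emph{unnormalised} trace norm $\norm{\cdot}_1$. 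Specialising to $\epsilon=0$ collapses the smooth min-entropies $H^{\epsilon}_{\min}$ to the ordinary conditional min-entropies $H_{\min}$ of \cref{def: min-ent} and removes the smoothing correction, leaving exactly the exponent $-\frac{1}{2}H_{\min}(A|E)_\rho-\frac{1}{2}H_{\min}(A|B)_\tau$.

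The second step is the normalisation bookkeeping flagged before the statement. Because \cite[Section 2.1]{DBWR14} omits the factor $\tfrac{1}{2}$ in the trace norm while we set $\norm{\cdot}_{\Tr}=\tfrac{1}{2}\norm{\cdot}_1$ (\cref{sec: notation}), the left-hand side of \cref{eq: decoupling} is exactly half of theirs. Multiplying the right-hand side by $\tfrac{1}{2}=2^{-1}$ therefore introduces the additional $-1$ in the exponent, which is precisely the discrepancy noted above. This yields \cref{eq: decoupling} as stated, with $\tau_{AB}=J(\Phi)$ the Choi state, and completes the reduction.

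Were one instead to reprove the inequality from scratch rather than cite it, I would follow the standard decoupling-by-second-moment route: bound the averaged trace norm by an averaged Hilbert--Schmidt norm against suitably chosen full-rank reference operators built from $\rho_E$ and $\tau_B$ (a Cauchy--Schwarz / Jensen step that produces the $\tfrac{1}{2}$ exponents on each entropy), then evaluate $\expec_U$ of the resulting squared $2$-norm using the second moment of the Haar measure. This last computation is where the representation-theoretic input enters: the twirl over $U\otimes\bar U$ reduces, via a decomposition of the kind given in \cref{lem: uni-irrep}, to a combination of the identity and the maximally-entangled projector, and the resulting coefficients assemble into collision entropies that one then relates to $H_{\min}$. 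I expect this Haar second-moment evaluation — and, in the general $\epsilon>0$ case, the smoothing argument needed to pass from collision entropy to smooth min-entropy — to be the main obstacle; taking $\epsilon=0$ as we do sidesteps the smoothing entirely and isolates the clean min-entropy bound, so that the reduction in the first two paragraphs suffices here.
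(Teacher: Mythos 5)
Your proposal matches the paper's treatment exactly: the paper does not reprove this theorem but imports it from \cite{DBWR14}, specialises the smooth min-entropies to $\epsilon=0$ (which also kills the additive smoothing correction), and accounts for the normalised trace norm $\norm{\cdot}_{\Tr}=\tfrac{1}{2}\norm{\cdot}_1$ by the extra $-1$ in the exponent --- precisely your first two paragraphs. Your third paragraph on reproving the inequality from scratch is a reasonable sketch of the standard argument but is not needed, as you yourself note.
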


The channel we will be considering is the partial trace. That is, we decompose the register $A=A_1A_2$ and let $\Phi=\Tr_{A_2}:B(\mc{H}_{A_1A_2})\rightarrow B(\mc{H}_{A_1})$. Then, $\tau_{AA_1'}=\omega_{A_2}\otimes\ketbra{\phi^+}_{A_1A_1'}$, so $H_{\min}(A|A_1')_\tau=\log|A_2|-\log|A_1|$.

\section{Main Result} \label{sec:result}

First, we state the main result of this work, which is the following theorem. We relegate the proof to the end of this section.

\begin{theorem}\label{thm:main-theorem}
    The family of QECMs $\{\ttt{Q}_{2\lambda,2}\}_{\lambda\in\N}$ is correct (as in~\cref{def:qecm}) and uncloneable secure (as in~\cref{def:cloning-att}).
\end{theorem}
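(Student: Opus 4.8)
The plan is to separate the statement into its two requirements, correctness and uncloneable security, and to treat them independently; the correctness part is immediate, while essentially all of the work lies in the security bound, which I would isolate as a statement about the associated monogamy-of-entanglement game.

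For correctness I would use the orthogonality criterion recorded after~\cref{def:qecm}, namely that a QECM is correct exactly when $\Tr(\sigma^k_x\sigma^k_{x'})=0$ for $x\neq x'$. Here $d=2\lambda$, so the encryptions of $\ttt{Q}_{2\lambda,2}$ are $U\sigma_xU^\dag$ with $\sigma_0=\frac1\lambda\ketbra{0}\otimes\mds{1}$ and $\sigma_1=\frac1\lambda\ketbra{1}\otimes\mds{1}$. Since conjugation by a unitary preserves the trace of a product, $\Tr\bigl(U\sigma_0U^\dag\,U\sigma_1U^\dag\bigr)=\Tr(\sigma_0\sigma_1)$, and this vanishes because $\ketbra{0}$ and $\ketbra{1}$ are orthogonal projectors on $A_1$. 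The decryption map is then just $U^\dag$ followed by a computational-basis measurement of the register $A_1$, which yields $1$-correctness.

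For uncloneable security I would exploit that the message set has size $|X|=2$, so that $\delta$-uncloneable security is precisely the bound $\mfk{c}(\ttt{Q}_{2\lambda,2})\leq\frac12+\delta$. I would then chain two facts: \cref{lem:moe-bound}, which gives $\mfk{c}(\ttt{Q}_{2\lambda,2})\leq\mfk{w}(\ttt{G}_{2\lambda,2})$, and the upper bound on the value of the Haar-measure game, $\mfk{w}(\ttt{G}_{2\lambda,2})=\frac12+O\bigl(\frac{\log\log(2\lambda)}{\log(2\lambda)}\bigr)$. Setting $f(\lambda)=\mfk{w}(\ttt{G}_{2\lambda,2})-\frac12$ gives a function with $\lim_{\lambda\to\infty}f(\lambda)=0$, which is exactly the definition of an uncloneable-secure family.

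The real obstacle is thus the tight bound on $\mfk{w}(\ttt{G}_{2\lambda,2})$, which is where decoupling enters and which I would prove by contradiction. Suppose a strategy with shared state $\psi_{ABC}$ wins with probability well above $\frac12$. Winning forces $A$ to be strongly correlated with Charlie, and I would first quantify this as an inner-product bound: $\psi_{ABC}$ has vanishingly small overlap with every tripartite state in which $A$ and $C$ decouple. Monogamy of entanglement transfers the same conclusion to $A$ and $B$, so that $\psi$ cannot have large overlap with any state in which $A$ and $B$ are maximally entangled; equivalently, the relevant conditional min-entropy $H_{\min}(A\mid B)$ cannot be very negative. I would then feed this into the decoupling inequality of~\cref{thm:decoupling} with channel $\Phi=\Tr_{A_2}$, for which $H_{\min}(A\mid A_1')_\tau=\log|A_2|-\log|A_1|$ is large since $|A_2|=\lambda$ and $|A_1|=2$. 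The inequality then forces the post-measurement state, obtained after Alice's Haar-random unitary and the partial trace over $A_2$, to be $O\bigl(\frac{\log\log(2\lambda)}{\log(2\lambda)}\bigr)$-close to a product $\tau_{A_1}\otimes\psi_B$; but decoupling of $A_1$ from $B$ caps Bob's guessing probability at essentially $\frac12$, contradicting the assumption. The delicate part is the bookkeeping: one must track precisely how the overlap and min-entropy estimates combine with the exponential error term of the decoupling bound so that the additive slack ends up as exactly $O\bigl(\frac{\log\log(2\lambda)}{\log(2\lambda)}\bigr)$, balancing the dimension growth of $A_2$ against the deviation of the winning probability from $\frac12$.
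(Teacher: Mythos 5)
Your proposal is correct and follows exactly the paper's route: correctness via the orthogonality of $\sigma_0$ and $\sigma_1$ under unitary conjugation, and security by chaining \cref{lem:moe-bound} with the bound $\mfk{w}(\ttt{G}_{d,2})\leq\frac12+O\bigl(\tfrac{\log\log d}{\log d}\bigr)$ of \cref{thm:D}, whose proof-by-contradiction you also sketch faithfully (inner-product bound against decoupled states, min-entropy lower bound, then the decoupling inequality with $\Phi=\Tr_{A_2}$). The paper's own proof of \cref{thm:main-theorem} is precisely this two-step composition, so there is nothing to add.
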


Note that the standard notion used for the security of a QECM is uncloneable-indistinguishable security (as in~\cref{def:cloning-distinguishing-att}). However, as noted in~\cref{sec:uncloneable-encryption}, the security guarantees coincide for QECMs encoding a one-bit message, and hence we work with the conceptually simpler notion of uncloneable security.

In this section, we fix $d$ and assume that we are working with a strategy for~$\ttt{G}_{d,2}$, which we denote~$\ttt{S}=(B,C,\{B^U_x\},\{C^U_x\},\rho_{ABC})$.

In the following lemma, we show that the expectation of a certain linear combination of random variables is close to the average value of each random variable with error bounded in terms of the variance.

\begin{lemma}\label{lem:exp-bound}
    Let $T_1,\ldots,T_N$ be complex random variables such that $\expec T_i=\mu$ for all $i$; and let $S_1,\ldots,S_N$ be complex random variables such that $|S_i|\leq 1$ and $\sum_iS_i=1$. Then,
    \begin{align}
        \abs[\Big]{\expec\sum_iT_iS_i-\mu}\leq N\varsigma,
    \end{align}
    where $\varsigma^2=\max_i\varsigma_{T_i}^2$, the maximal variance of the variables $T_i$.
\end{lemma}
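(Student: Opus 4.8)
The plan is to exploit the normalisation $\sum_i S_i = 1$ in order to recentre the $T_i$ about their common mean $\mu$, and then to finish with the triangle inequality and Jensen's inequality. The key observation is that although $T_i$ and $S_i$ may be arbitrarily correlated—so that we cannot factor $\expec[T_iS_i]$ as a product of expectations—this correlation is harmless, since at no point do we need more than the pointwise bound $|S_i|\leq 1$.

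First I would use $\sum_iS_i=1$ to absorb $\mu$: since $\expec\sum_iS_i=1$, we have $\mu=\expec\sum_i\mu S_i$, and therefore
\begin{align*}
    \expec\sum_iT_iS_i-\mu=\expec\sum_i(T_i-\mu)S_i.
\end{align*}
Taking absolute values and applying the triangle inequality together with $|S_i|\leq 1$ gives
\begin{align*}
    \abs*{\expec\sum_iT_iS_i-\mu}\leq\expec\sum_i|T_i-\mu|\,|S_i|\leq\sum_i\expec|T_i-\mu|.
\end{align*}

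Finally I would bound each summand by the standard deviation using Jensen's inequality (equivalently Cauchy--Schwarz), $\expec|T_i-\mu|\leq\parens*{\expec|T_i-\mu|^2}^{1/2}=\varsigma_{T_i}$, where the last equality is the identity $\expec|T_i-\mu|^2=\expec|T_i|^2-|\mu|^2=\varsigma_{T_i}^2$, valid precisely because $\mu=\expec T_i$. Since $\varsigma_{T_i}\leq\varsigma$ for every $i$ by the definition of $\varsigma$, summing the $N$ terms yields the claimed bound $N\varsigma$. There is essentially no difficult step; the only point requiring care is to avoid assuming independence of $T_i$ and $S_i$, and the recentring trick via $\sum_iS_i=1$ is exactly what lets the argument proceed without it.
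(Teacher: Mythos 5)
Your proof is correct and follows essentially the same route as the paper: both hinge on the recentring identity $\expec\sum_i T_iS_i-\mu=\expec\sum_i(T_i-\mu)S_i$, which is the only place the hypothesis $\sum_i S_i=1$ is used. The sole difference is in the finishing step and is cosmetic: the paper applies a single Cauchy--Schwarz inequality to the joint sum, bounding by $\sqrt{\expec\sum_i|T_i-\mu|^2}\,\sqrt{\expec\sum_i|S_i|^2}\leq\sqrt{N\varsigma^2}\sqrt{N}=N\varsigma$, whereas you use the triangle inequality with $|S_i|\leq 1$ and then Jensen term by term, reaching the (in fact slightly sharper) intermediate bound $\sum_i\varsigma_{T_i}\leq N\varsigma$.
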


\begin{proof}
    Using the Cauchy-Schwarz inequality,
    \begin{align*}
        \abs[\Big]{\expec\sum_i T_i S_i-\mu}&=\abs[\Big]{\expec\sum_i(T_i-\mu)S_i}\\
        &\leq\sqrt{\expec\sum_i|T_i-\mu|^2}\sqrt{\expec\sum_i|S_i|^2}\\
        &\leq\sqrt{\sum_i\varsigma_{T_i}^2}\sqrt{N}\\
        &\leq N\varsigma.\qedhere
    \end{align*}
\end{proof}

The next lemma bounds the mean and variance of random variables generated by Haar-random projections. The bound on the standard deviation scales with the inverse of the dimension.

\begin{lemma}\label{lem:variance}
    Let $d$ be even, and let $A\in B(\mathbb{C}^d)$ be a traceless hermitian unitary. Define the random variable $T=\frac{1}{d}\Tr\parens*{U_1AU_1^\dag U_2AU_2^\dag\cdots U_nAU_n^\dag}$, where the unitaries $U_1,\ldots,U_n$ are i.i.d. random samples from the Haar measure in dimension $d$. Then, the expectations
    \begin{align}
    \begin{split}
        &\expec T=0,\\
        &\expec |T|^2=\parens*{-\frac{1}{d^2-1}}^n\frac{d^2-1}{d^2}+\frac{1}{d^2}\leq\frac{2}{d^2}
    \end{split}
    \end{align}
\end{lemma}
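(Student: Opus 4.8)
The plan is to compute both moments by averaging over the independent Haar unitaries one at a time, exploiting the fact that a single Haar twirl projects onto the commutant of the relevant representation. For the first moment I would peel off the unitaries in turn. Since $U_i$ occurs only in the factor $U_i\Pi_{x_i}U_i^\dagger$, and $\int U A U^\dagger\,dU=\frac{\Tr A}{d}\mds{1}$ (the average is an intertwiner of the fundamental representation, hence a multiple of $\mds{1}$ by Schur's lemma, with the constant fixed by the trace), each average replaces $U_i\Pi_{x_i}U_i^\dagger$ by $\frac{\Tr\Pi_{x_i}}{d}\mds{1}=\frac{1}{2}\mds{1}$. Peeling off all $n$ factors contributes $\frac{1}{2^n}$ and leaves $\frac{1}{d}\Tr\mds{1}=1$, giving $\expec T=\frac{1}{2^n}$.

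For the second moment the key device is to tensor the two copies. Writing $M=\prod_{i=1}^n U_i\Pi_{x_i}U_i^\dagger$, we have $T=\frac{1}{d}\Tr M$ and $\bar T=\frac{1}{d}\Tr\bar M$ with $\bar M=\prod_i \bar U_i\bar\Pi_{x_i}\bar U_i^\dagger$, so
\[
|T|^2=\frac{1}{d^2}\Tr(M)\Tr(\bar M)=\frac{1}{d^2}\Tr(M\otimes\bar M)=\frac{1}{d^2}\Tr\left(\prod_{i=1}^n (U_i\otimes\bar U_i)(\Pi_{x_i}\otimes\bar\Pi_{x_i})(U_i\otimes\bar U_i)^\dagger\right).
\]
I would then take the expectation over the $U_i$ one at a time using \cref{lem: uni-irrep}. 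The crucial observation is that the averages factorise: because the $U_i$ are independent and each block $(U_i\otimes\bar U_i)(\Pi_{x_i}\otimes\bar\Pi_{x_i})(U_i\otimes\bar U_i)^\dagger$ is surrounded by operators not depending on $U_i$, the expectation of the product is the ordered product of the expectations of the blocks. By \cref{lem: uni-irrep}, each block averages to $\alpha P+\beta\ketbra{\phi^+}$ with $P=\mds{1}-\ketbra{\phi^+}$ the projector onto $\ket{\phi^+}^\perp$, where $\beta=\langle\phi^+|\Pi_{x_i}\otimes\bar\Pi_{x_i}|\phi^+\rangle$ and $\alpha=\frac{1}{d^2-1}\Tr\big((\Pi_{x_i}\otimes\bar\Pi_{x_i})P\big)$.

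Then I would evaluate these two constants and use orthogonality. Using the transpose trick $\langle\phi^+|A\otimes B|\phi^+\rangle=\frac{1}{d}\Tr(AB^T)$ together with $\bar\Pi_{x_i}^T=\Pi_{x_i}^\dagger=\Pi_{x_i}$ gives $\beta=\frac{1}{d}\Tr(\Pi_{x_i}^2)=\frac{1}{d}\Tr\Pi_{x_i}=\frac{1}{2}$, and $\Tr(\Pi_{x_i}\otimes\bar\Pi_{x_i})=(\Tr\Pi_{x_i})^2=\frac{d^2}{4}$ gives $\alpha=\frac{d^2/4-1/2}{d^2-1}=\frac{1}{4}-\frac{1}{4(d^2-1)}$; both constants are the same for $x_i=0$ and $x_i=1$, so the answer is independent of the string $x_1\ldots x_n$. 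Since $P$ and $\ketbra{\phi^+}$ are orthogonal projectors summing to $\mds{1}$, $(\alpha P+\beta\ketbra{\phi^+})^n=\alpha^n P+\beta^n\ketbra{\phi^+}$, and tracing with $\Tr P=d^2-1$, $\Tr\ketbra{\phi^+}=1$ yields $\expec|T|^2=\frac{1}{d^2}\big(\alpha^n(d^2-1)+\beta^n\big)$, which is exactly the claimed identity; the stated upper bound follows from $\alpha\le\frac{1}{4}$ and $\frac{d^2-1}{d^2}\le1$.

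I expect the main obstacle to be the bookkeeping in the factorisation step: one must justify carefully that averaging an inner block is unaffected by the surrounding factors, so that the nested expectation collapses to an $n$-fold operator power, and that the twirl formula of \cref{lem: uni-irrep}, stated for intertwiners, applies to the general operator $\Pi_{x_i}\otimes\bar\Pi_{x_i}$ because the Haar average of any operator is automatically an intertwiner whose overlaps with $\ketbra{\phi^+}$ and $P$ coincide with those of the original operator. The remaining conjugation identities ($\bar\Pi^T=\Pi$ and the transpose trick) are routine.
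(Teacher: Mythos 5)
Your proposal is correct and follows essentially the same route as the paper: the first moment via the single-unitary twirl $\int UAU^\dag\,dU=\frac{\Tr A}{d}\mds{1}$, and the second moment by tensoring with the conjugate copy, averaging each independent block with \cref{lem: uni-irrep}, and collapsing the product using orthogonality of $\ketbra{\phi^+}$ and its complement. If anything, you are slightly more careful than the paper, which writes $\Pi_{x_i}$ instead of $\bar\Pi_{x_i}$ in the conjugated factor; your transpose-trick computation with $\bar\Pi^T=\Pi$ justifies the values $\beta=\tfrac{1}{2}$ and $\alpha=\tfrac{1}{4}-\tfrac{1}{4(d^2-1)}$ for arbitrary (not necessarily real) projections.
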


Therefore $\varsigma_T\leq\frac{\sqrt{2}}{d}$.

\begin{proof}
    Using the fact that $\expec U_iAU_i^\dag=0$, we have that
    \begin{align*}
        \expec T=\frac{1}{d}\Tr\parens[\Big]{\prod_i\expec U_iAU_i^\dag}=\frac{1}{2^n}\frac{1}{d}\Tr(0)=0.
    \end{align*}
    Next, note that
    \begin{align*}
        \expec |T|^2&=\expec\frac{1}{d}\Tr\parens[\Big]{\prod_i U_iAU_i^\dag}\frac{1}{d}\Tr\parens[\Big]{\prod_i A\Pi_{x_i}\bar{U}_i^\dag}\\
        &=\expec\frac{1}{d^2}\Tr\parens[\Big]{\prod_i U_iAU_i^\dag\otimes\prod_i \bar{U}_iA\bar{U}_i^\dag}\\
        &=\frac{1}{d^2}\Tr\parens[\Big]{\prod_i\expec(U_i\otimes \bar{U}_i)(A\otimes A)(U_i\otimes \bar{U}_i)^\dag}.
    \end{align*}
    By \cref{lem: uni-irrep}, we have that $$\int(U\otimes\bar{U})X(U\otimes\bar{U})^\dag dU=\frac{\Tr(\Pi X)}{d^2-1}\Pi+\braket{\phi^+}{X}{\phi^+}\ketbra{\phi^+}$$ for any $X\in B(\mc{H}_{AA'})$, where $\Pi=\mds{1}-\ketbra{\phi^+}$. Since the values $\braket{\phi^+}{A\otimes A}{\phi^+}=1$ and  $\Tr(A\otimes A)=0$, we have that
    \begin{align*}
        \expec(U_i\otimes \bar{U}_i)(A\otimes A)(U_i\otimes \bar{U}_i)^\dag&=-\frac{1}{d^2-1}\Pi+\ketbra{\phi^+}.
    \end{align*}
    As such,
    \begin{align*}
        \expec |T|^2&=\frac{1}{d^2}\Tr\parens*{\parens*{\parens*{-\frac{1}{d^2-1}}\Pi+\ketbra{\phi^+}}^n}\\
        &=\frac{1}{d^2}\Tr\parens*{\parens*{-\frac{1}{d^2-1}}^n\Pi+\ketbra{\phi^+}}\\
        &=\parens*{-\frac{1}{d^2-1}}^n\frac{d^2-1}{d^2}+\frac{1}{d^2}.
    \end{align*}
    For the upper bound, note that 
    \begin{align}
       \parens*{-\frac{1}{d^2-1}}^n\frac{d^2-1}{d^2}&=\frac{(-1)^n}{d^2(d^2-1)^{n-1}}\leq\frac{1}{d^2}.\qedhere
    \end{align}
\end{proof}

Finally, in the following technical steps, we show that a maximally-entangled state between Alice and Bob behaves like an eigenvector with eigenvalue $\frac{1}{4}$.

\begin{lemma}\label{lem:near-eigenstate}
    Let $O'=\frac{1}{4}\int(2UAU^\dag\otimes I\otimes B_U+I\otimes B_U\otimes B_U)dU$ be the modified game polynomial of a symmetric strategy for $\ttt{G}_{d,2}$. Let $\sigma_{ABC}=\ketbra{\phi}_{ABC}$ be such that $\sigma_{AC}=\omega_A\otimes\sigma_C$, where $\omega_A$ is the maximally mixed state on $A$. Then, $\braket{\phi}{(O')^n}{\phi}\leq\frac{1}{4^n}+\parens*{\frac{3}{4}}^n\frac{2}{\sqrt{d}}$.
\end{lemma}

\begin{proof}
    Write $A_U=UAU^\dag$. Let $T_1^U=2A_U\otimes I\otimes B_U$ and $T_2^U=I\otimes B_U\otimes B_U$. Then,
    \begin{align*}
        \braket{\phi}{(O')^n}{\phi}=\frac{1}{4^n}\sum_{i_1,\ldots,i_n=1}^2\int\cdots\int\braket{\phi}{T_{i_1}^{U_1}\cdots T_{i_n}^{U_n}}{\phi}dU_1\cdots dU_n
    \end{align*}
    Now, fix a sequence $i_1,\ldots,i_n$ and set $S=\set*{j}{i_j=1}$. Enumerate the elements of $S=\{j_1,\ldots,j_k\}$.  Let $E_{l}^{\vec{U}}=T_2^{U_{j_l+1}}\cdots T_2^{U_{j_{l+1}-1}}$. Then,
    \begin{align*}
        \int&\cdots\int\braket{\phi}{T_{i_1}^{U_1}\cdots T_{i_n}^{U_n}}{\phi}dU_1\cdots dU_n\\
        &=2^k\int\cdots\int\braket{\phi}{E_0^{\vec{U}}(A_{U_{j_1}}\otimes I\otimes B_{U_{j_1}})E_1^{\vec{U}}\cdots (A_{U_{j_k}}\otimes I\otimes B_{U_{j_k}})E_k^{\vec{U}}}{\phi}dU_1\cdots dU_n\\
        &\leq2^k\int\cdots\int\norm*{\int\cdots\int E_0^{\vec{U}}(A_{U_{j_1}}\otimes I\otimes B_{U_{j_1}})E_1^{\vec{U}}\cdots (A_{U_{j_k}}\otimes I\otimes B_{U_{j_k}})E_k^{\vec{U}}\ket{\phi}dU_{j_1}\cdots dU_{j_k}}\prod_{j\notin S}dU_j\\
        &\leq2^k\sqrt{\int\cdots\int\norm*{\int\cdots\int E_0^{\vec{U}}(A_{U_{j_1}}\otimes I\otimes B_{U_{j_1}})E_1^{\vec{U}}\cdots (A_{U_{j_k}}\otimes I\otimes B_{U_{j_k}})E_k^{\vec{U}}\ket{\phi}dU_{j_1}\cdots dU_{j_k}}^2\prod_{j\notin S}dU_j}
    \end{align*}
    Note that $E_l^{\vec{U}}$ does not depend on the $U_{j}$ for $j\in S$, so we can treat it as constant in the inner integral. As such, we can cancel out the unitary on the $B$ system, leaving us with a operator $F_k$ on the $C$ system. Now, writing $\Tr_{AB}\ketbra{\phi}=\sigma$, this gives
    \begin{align*}
        &\norm*{\int\cdots\int E_0^{\vec{U}}(A_{U_{j_1}}\otimes I\otimes B_{U_{j_1}})E_1^{\vec{U}}\cdots (A_{U_{j_k}}\otimes I\otimes B_{U_{j_k}})E_k^{\vec{U}}\ket{\phi}dU_{j_1}\cdots dU_{j_k}}^2\\
        &=\norm*{\int\cdots\int A_{U_1}\cdots A_{U_k}\otimes I\otimes F_0B_{U_1}F_1\cdots B_{U_k}F_k\ket{\phi}dU_{1}\cdots dU_{k}}^2\\
        &=\int\cdots\int \braket{\phi}{A_{U_1}\cdots A_{U_{2k}}\otimes I\otimes F_kB_{U_1}F_{k-1}\cdots B_{U_k}B_{U_{k+1}}F_1\cdots B_{U_{2k}}F_k}{\phi}dU_{1}\cdots dU_{2k}\\
        &=\int\cdots\int \frac{1}{d}\Tr(A_{U_1}\cdots A_{U_{2k}})\Tr(F_kB_{U_1}F_{k-1}\cdots B_{U_k}B_{U_{k+1}}F_1\cdots C_{B_{2k}}F_k\sigma)dU_{1}\cdots dU_{2k}.
    \end{align*}
    Since $\abs*{\Tr(F_kB_{U_1}F_{k-1}\cdots B_{U_k}B_{U_{k+1}}F_1\cdots B_{U_{2k}}F_k\sigma)}\leq 1$, we can use \cref{lem:exp-bound} to bound this by $$\frac{1}{d}\sqrt{\int\cdots\int \abs*{\Tr(A_{U_1}\cdots A_{U_{2k}})}^2dU_1\cdots dU_{2k}}.$$
    Using \cref{lem:variance}, we have that $\int\cdots\int \abs*{\Tr(A_{U_1}\cdots A_{U_{2k}})}^2dU_1\cdots dU_{2k}=\frac{1}{(d^2-1)^{2k-1}}+1$, giving
    \begin{align*}
        \int&\cdots\int\braket{\phi}{T_{i_1}^{U_1}\cdots T_{i_n}^{U_n}}{\phi}dU_1\cdots dU_n\leq2^k\sqrt{\frac{1}{d}\sqrt{1+\frac{1}{(d^2-1)^{2k-1}}}}.
    \end{align*}
    For all $k\neq 0$, this is upper-bounded by $\frac{2^{k+1}}{\sqrt{d}}$. For the case $k=0$, we have that $$\int\cdots\int\braket{\phi}{T_{2}^{U_1}\cdots T_{2}^{U_n}}{\phi}dU_1\cdots dU_n\leq 1.$$ Hence,
    \begin{align*}
        \braket{\phi}{(O')^n}{\phi}\leq\frac{1}{4^n}\parens*{1+\sum_{k=1}^n\binom{n}{k}\frac{2^{k+1}}{\sqrt{d}}}\leq\frac{1}{4^n}+\parens*{\frac{3}{4}}^n\frac{2}{\sqrt{d}}.
    \end{align*}
\end{proof}

\begin{theorem}\label{thm:inner-product-bound}
    Let $O'=\frac{1}{4}\int(2UAU^\dag\otimes I\otimes B_U+I\otimes B_U\otimes B_U)dU$. Define $\varepsilon\in\mathbb{R}$ such that $\|O\|=\frac{1}{4}+\varepsilon$. Suppose $\rho_{ABC}=\ketbra{\psi}_{ABC}$ is the maximum-eigenvalue eigenstate of $P$ with eigenvalue $\frac{1}{4}+\varepsilon$.  Let $\sigma_{ABC}=\ketbra{\phi}_{ABC}$ be such that $\sigma_{AC}=\omega_A\otimes\sigma_C$. Then, assuming $\varepsilon > 0,$
    \begin{align}
        \abs*{\braket{\psi}{\phi}}^2\leq\frac{3}{2^{\frac{2\log d}{\log 3}\varepsilon}}
    \end{align}
\end{theorem}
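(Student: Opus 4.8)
The plan is to exploit that $|\psi\rangle$ is the top eigenvector of the positive operator $P$ in order to \emph{amplify} the overlap. Since $P\geq 0$ and $P|\psi\rangle=(\tfrac12+\varepsilon)|\psi\rangle=\|P\|\,|\psi\rangle$, for every $n\in\N$ one has the operator inequality $(\tfrac12+\varepsilon)^{2n}\ketbra{\psi}\leq P^{2n}$. Sandwiching between $\langle\phi|$ and $|\phi\rangle$ gives
\begin{equation*}
  \abs{\braket{\psi}{\phi}}^2\leq\frac{\braket{\phi}{P^{2n}}{\phi}}{(\tfrac12+\varepsilon)^{2n}}.
\end{equation*}
Thus it suffices to show that the high moment $\braket{\phi}{P^{2n}}{\phi}$ is essentially as small as $2^{-2n}$; that is, that a decoupled state behaves, against $P$, almost like an eigenstate of eigenvalue $\tfrac12$. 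The gap between $\tfrac12$ and $\tfrac12+\varepsilon$ then produces the required exponential decay in $n$, which is exactly the regime where \cref{lem:near-eigenstate}, taken with $\delta$ approaching $\varepsilon$ from below, controls how $\phi$ concentrates on the eigenvalue-$\tfrac12$ subspace of $Q$.

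To bound $\braket{\phi}{P^{2n}}{\phi}$ I would expand
\begin{equation*}
  P^{2n}=\expec_{U_1,\ldots,U_{2n}}\sum_{\vec x}\Big(\prod_iU_i\Pi_{x_i}U_i^\dag\Big)\otimes\Big(\prod_iB^{U_i}_{x_i}\Big)\otimes\Big(\prod_iC^{U_i}_{x_i}\Big)
\end{equation*}
and use the decoupling hypothesis $\sigma_{AC}=\omega_A\otimes\sigma_C$. Because $A$ is maximally mixed and uncorrelated with $C$, the Alice factor should contribute only through the random variable $T_{\vec x}=\tfrac1d\Tr\big(\prod_iU_i\Pi_{x_i}U_i^\dag\big)$ of \cref{lem:variance}, whose mean is $2^{-2n}$ and whose standard deviation is at most $2^{-n}/d$, while Charlie's operators assemble into a nonnegative weighting that sums to at most $1$ by POVM completeness ($\sum_xC^U_x=\mds1$). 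Feeding these into \cref{lem:exp-bound} with $N=2^{2n}$ reproduces the Charlie-only estimate $\braket{\phi}{Q^{2n}}{\phi}\leq 2^{-2n}+2^{n}/d$ that underlies \cref{lem:near-eigenstate}.

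The delicate point, and what I expect to be the main obstacle, is that Bob's operators $\prod_iB^{U_i}_{x_i}$ do \emph{not} factor out against the decoupled marginal the way Charlie's do, since $A$ is purified through $B$ and those two systems are highly entangled; in the Schmidt form $|\phi\rangle=\tfrac1{\sqrt d}\sum_a|a\rangle\otimes|\xi_a\rangle$ the conjugation $\langle\xi_{a'}|(\prod B)\otimes(\prod C)|\xi_a\rangle$ is proportional to the identity only when Bob acts trivially. A naive Cauchy--Schwarz that peels Bob off and bounds it by $1$ using $\sum_xB^U_x=\mds1$ is too lossy: it halves the effective decay rate and produces a bound that blows up for small $\varepsilon$. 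The correct handling must keep Bob intact while still extracting the full $2^{-2n}$ suppression, relying only on completeness of $\{B^U_x\}_x$ and on the Bob--Charlie correlation operator having norm at most $1$, so that the suppression is genuinely driven by Charlie's inability to track $x$ across all $2n$ rounds. This is the technical heart of the argument.

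Finally, granting $\braket{\phi}{P^{2n}}{\phi}\leq 2^{-2n}+2^{n}/d$, I would optimise the free parameter. Taking $2n=\tfrac23\log d$ makes the error term $2^n/d=d^{-2/3}$ comparable to the main term $2^{-2n}=d^{-2/3}$, and using $(1+2\varepsilon)^{1/(2\varepsilon)}\geq2$ gives
\begin{equation*}
  \Big(\tfrac12+\varepsilon\Big)^{2n}=2^{-2n}(1+2\varepsilon)^{2n}\geq2^{-2n}\,2^{\frac43\varepsilon\log d}.
\end{equation*}
Dividing, the factors of $2^{-2n}$ cancel and the two contributions combine into the claimed $\abs{\braket{\psi}{\phi}}^2\leq 8\cdot2^{-\frac43\varepsilon\log d}$, the constant $8$ absorbing the error term together with the slack in $(1+2\varepsilon)^{1/(2\varepsilon)}\geq2$.
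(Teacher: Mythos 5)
Your opening move and closing optimisation are both fine: $P\geq 0$ and $P\ket{\psi}=\parens*{\frac{1}{2}+\varepsilon}\ket{\psi}$ do give $\parens*{\frac{1}{2}+\varepsilon}^{2n}\ketbra{\psi}\leq P^{2n}$, hence $\abs*{\braket{\psi}{\phi}}^2\leq\braket{\phi}{P^{2n}}{\phi}\big/\parens*{\frac{1}{2}+\varepsilon}^{2n}$, and with $2n=\frac{2}{3}\log d$ and $(1+2\varepsilon)^{\frac{1}{2\varepsilon}}\geq 2$ the claimed bound would follow. But everything then hinges on the moment estimate $\braket{\phi}{P^{2n}}{\phi}\leq 2^{-2n}+2^{n}/d$, which you never prove --- you explicitly defer it as ``the technical heart.'' This is a genuine gap, not a routine detail. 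The hypothesis $\sigma_{AC}=\omega_A\otimes\sigma_C$ constrains only the $AC$ marginal; $A$ and $B$ may be maximally entangled (e.g.\ $\ket{\phi}=\ket{\phi^+}_{AB}\otimes\ket{c}_C$), and then Bob's products $\prod_iB^{U_i}_{x_i}$, pulled through $\ket{\phi^+}$, conjugate Alice's projectors instead of collapsing to a scalar weight the way Charlie's do, so \cref{lem:exp-bound} cannot be applied with Bob kept ``intact.'' The obvious repair also fails: $P\leq Q$ (from $B^U_x\leq\mds{1}$) does not give $P^{2n}\leq Q^{2n}$, since $x\mapsto x^{2n}$ is not operator monotone (\cref{sec:operator-monotonicity}), so you cannot transfer \cref{lem:near-eigenstate} to $P$. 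Note moreover that your intermediate claim is strictly stronger than the theorem itself: it would bound the overlap of $\phi$ with \emph{every} eigenvector of $P$ whose eigenvalue exceeds $\frac{1}{2}$, whereas the theorem concerns only the top eigenvector, and the paper's proof genuinely uses that maximality. Neither your sketch nor the paper establishes the stronger statement, so it cannot be cited as known.

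The paper sidesteps exactly the obstacle you identified by never taking moments of $P$ against $\phi$. Moments are taken only of $Q=\int\sum_xU\Pi_x U^\dag\otimes\mds{1}\otimes C^U_xdU$ --- that is \cref{lem:near-eigenstate}, essentially the part of your sketch that does work, since decoupling directly controls Charlie. Bob then enters only through the operator inequality $P\leq Q$, and the lossiness you feared from ``peeling Bob off'' is avoided by replacing powers with a function that \emph{is} operator monotone: with $f_\Delta(x)=(\log\Delta-\log x)^{-1}$ and $\Pi_\delta$ the spectral projection of $Q$ onto eigenvalues in $\parens*{\frac{1}{2}-\delta,\frac{1}{2}+\delta}$, one splits $\abs*{\braket{\psi}{\phi}}^2\leq 2\abs*{\braket{\psi}{\mds{1}-\Pi_\delta}{\phi}}^2+2\abs*{\braket{\psi}{\Pi_\delta}{\phi}}^2$, kills the first term by \cref{lem:near-eigenstate}, and bounds the second by $\braket{\psi}{f_\Delta(P)^{-1}}{\psi}\braket{\phi}{\Pi_\delta f_\Delta(P)\Pi_\delta}{\phi}\leq f_\Delta\parens*{\frac{1}{2}+\varepsilon}^{-1}f_\Delta\parens*{\frac{1}{2}+\delta}$, using $\Pi_\delta f_\Delta(P)\Pi_\delta\leq f_\Delta(\Pi_\delta P\Pi_\delta)\leq f_\Delta\parens*{\parens*{\frac{1}{2}+\delta}\mds{1}}$ (valid because $P\leq Q$) together with the eigenvector relation $\braket{\psi}{f_\Delta(P)^{-1}}{\psi}=f_\Delta\parens*{\frac{1}{2}+\varepsilon}^{-1}$; letting $\Delta\rightarrow\frac{1}{2}+\varepsilon$ makes this second term vanish, and $\delta\rightarrow\varepsilon$ gives the theorem. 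To rescue your route you would have to prove your moment bound on $P$ for arbitrary Bob operators and arbitrary decoupled $\phi$, which is an open strengthening of the theorem; otherwise, the operator-monotonicity detour is the idea missing from your proposal.
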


\begin{proof}
    First, fix $n\in\N$. Then, using \cref{lem:near-eigenstate},
    \begin{align*}
        \abs*{\braket{\phi}{\psi}}^2&=\frac{1}{\parens*{\frac{1}{4}+\varepsilon}^{2n}}\abs*{\braket{\phi}{(O')^n}{\psi}}^2\leq\frac{\braket{\phi}{(O')^n}{\phi}}{\parens*{\frac{1}{4}+\varepsilon}^{2n}}\\
        &\leq\frac{\frac{1}{4^{2n}}+\parens*{\frac{3}{4}}^{2n}\frac{2}{\sqrt{d}}}{\parens*{\frac{1}{4}+\varepsilon}^{2n}}=\frac{1+\frac{2\cdot 3^{2n}}{\sqrt{d}}}{\parens*{1+4\varepsilon}^{2n}}.
    \end{align*}
    Set $n=\frac{\log d}{4\log 3}$, and note that $\parens*{1+4\varepsilon}^{\frac{1}{4\varepsilon}}\geq 2$, so we get
    \begin{align*}
        \abs*{\braket{\phi}{\psi}}^2&\leq\frac{3}{2^{8n\varepsilon}}=\frac{3}{2^{\frac{2\log d}{\log 3}\varepsilon}}.
    \end{align*}
\end{proof}

\begin{corollary}\label{cor:entropy-bound}
    Let $\rho$ be as in \cref{thm:inner-product-bound}. Then,
    \begin{align}
        H_{\min}(A|B)_\rho\geq-\parens*{1-\tfrac{2}{\log 3}\varepsilon}\log d-\log 3
    \end{align}
\end{corollary}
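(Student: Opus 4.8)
The plan is to convert the statement into a bound on the conditional min-entropy through its operational interpretation (\cref{lem: op-int-min-ent}) and then feed in the overlap bound of \cref{thm:inner-product-bound}. Writing $\rho=\ketbra{\psi}_{ABC}$, \cref{lem: op-int-min-ent} gives $2^{-H_{\min}(A|B)_\rho}=|A|\max_{\mc E}F\big((\id_A\otimes\mc E)(\rho_{AB}),\ketbra{\phi^+}\big)^2=d\,F_{\max}^2$, where $F_{\max}^2:=\max_{\mc E}\braket{\phi^+}{(\id_A\otimes\mc E)(\rho_{AB})}{\phi^+}$ and the maximum runs over channels $\mc E:B(\mc H_B)\to B(\mc H_{A'})$. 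Taking $-\log$, the claimed inequality $H_{\min}(A|B)_\rho\geq-(1-\tfrac43\varepsilon)\log d-3$ is equivalent to $F_{\max}^2\leq 8\cdot 2^{-\frac43\varepsilon\log d}$, which is exactly the right-hand side of \cref{thm:inner-product-bound}. So it suffices to show $F_{\max}^2\leq\max\{\,\abs{\braket{\psi}{\phi}}^2:\ \ket\phi_{ABC}\text{ pure},\ \phi_{AC}=\omega_A\otimes\phi_C\,\}$; that is, that the maximal ebit fraction of $A$ with $B$ is controlled by the overlap of $\psi$ with states in which $A$ and $C$ decouple.

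The heart of the argument is this bridge, which is a quantitative monogamy statement proved by purification. Fix an optimal $\mc E$ and a Stinespring dilation $V:\mc H_B\to\mc H_{A'}\otimes\mc H_E$, and set $\ket\Psi_{AA'EC}=(\id_A\otimes V\otimes\id_C)\ket\psi_{ABC}$. Since $V$ is an isometry, $\Psi_{AC}=\psi_{AC}$ and $\Psi_{AA'}=(\id_A\otimes\mc E)(\rho_{AB})$, so $F_{\max}=F(\Psi_{AA'},\ketbra{\phi^+})$. By Uhlmann's theorem there is a purification of $\ketbra{\phi^+}_{AA'}$ of the form $\ket\Omega_{AA'EC}=\ket{\phi^+}_{AA'}\otimes\ket{\alpha}_{EC}$ with $\abs{\braket{\Psi}{\Omega}}=F_{\max}$. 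The marginal $\Omega_{AC}=\omega_A\otimes\alpha_C$ is decoupled, and monotonicity of fidelity under the partial trace $\Tr_{A'E}$ yields $F(\psi_{AC},\omega_A\otimes\alpha_C)\geq\abs{\braket{\Psi}{\Omega}}=F_{\max}$. Finally, any purification of $\omega_A\otimes\alpha_C$ on $\mc H_{ABC}$ is, by definition, a pure state $\ket\phi_{ABC}$ with $\phi_{AC}=\omega_A\otimes\alpha_C$ — that is, exactly a decoupled state — so a further application of Uhlmann's theorem produces such a $\ket\phi$ with $\abs{\braket{\psi}{\phi}}=F(\psi_{AC},\omega_A\otimes\alpha_C)\geq F_{\max}$. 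Combining the two inequalities gives $F_{\max}^2\leq\abs{\braket{\psi}{\phi}}^2\leq\max_{\phi\text{ decoupled}}\abs{\braket{\psi}{\phi}}^2$.

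With the bridge in hand, \cref{thm:inner-product-bound} (applied with this $\psi$ and the decoupled $\phi$) bounds $\max_{\phi\text{ decoupled}}\abs{\braket{\psi}{\phi}}^2\leq 8\cdot 2^{-\frac43\varepsilon\log d}$, whence $2^{-H_{\min}(A|B)_\rho}=d\,F_{\max}^2\leq 8\,d\,2^{-\frac43\varepsilon\log d}=8\cdot 2^{(1-\frac43\varepsilon)\log d}$. Taking $-\log$ of both sides gives $H_{\min}(A|B)_\rho\geq-(1-\tfrac43\varepsilon)\log d-3$, as desired.

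The step I expect to require the most care is the monogamy bridge of the second paragraph: translating the operational quantity $F_{\max}^2$ (the best ebit fraction extractable between $A$ and $B$) into an overlap with a genuinely decoupled pure state. The subtlety is that the naive attempt — dilating $\mc E$ and pulling the maximally entangled state back to $\mc H_{ABC}$ through $V^\dag$ — does \emph{not} preserve decoupling, because $V^\dag$ is only a co-isometry and mixes the purifying systems of $A$ and $C$; passing instead through the $AC$-marginal and invoking monotonicity of fidelity under partial trace circumvents this. One should also check the standard dimension caveat in the final Uhlmann step: equality of the marginal fidelity with the optimal decoupled overlap requires $\mc H_B$ to be large enough to purify $\omega_A\otimes\alpha_C$, which holds without loss of generality since the registers $B,C$ of a strategy may be enlarged, and padding $B$ changes neither $H_{\min}(A|B)_\rho$ nor the applicability of \cref{thm:inner-product-bound}.
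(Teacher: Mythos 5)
Your proof is correct, and it follows the paper's skeleton: both start from the K\"onig--Renner--Schaffner operational interpretation (\cref{lem: op-int-min-ent}), take a Stinespring dilation $V:\mc{H}_B\rightarrow\mc{H}_{A'E}$ of the optimal channel, identify the relevant decoupled state as one of the form $\ket{\phi^+}_{AA'}\otimes\ket{v}_{EC}$, and close with \cref{thm:inner-product-bound}. The genuine difference is in the bridge. The paper never leaves the dilated space: it notes that $\ket{\Psi}=(\mds{1}\otimes V\otimes\mds{1})\ket{\psi}$ is a top-eigenvalue eigenvector of the conjugated operator $(\mds{1}\otimes V\otimes\mds{1})P(\mds{1}\otimes V^\dag\otimes\mds{1})$, and that $\ket{\phi^+}_{AA'}\otimes\ket{v}_{EC}$ is $A$--$C$ decoupled when Bob's register is read as $A'E$, so \cref{thm:inner-product-bound} applies directly to the pair $(\ket{\Psi},\ket{\phi^+}\otimes\ket{v})$; Bob's conjugated operators $VB^U_xV^\dag$ only sum to $VV^\dag\leq\mds{1}$, but the theorem's proof needs only $P\leq Q$, so this is harmless. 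You instead pull everything back to the original space $\mc{H}_{ABC}$: your first Uhlmann step (for the pure target $\ketbra{\phi^+}$) is in fact identical to the paper's ``purifying'' maximisation over $\ket{v}$, but then you add monotonicity of fidelity under $\Tr_{A'E}$ and a second Uhlmann application to manufacture a decoupled $\ket{\phi}_{ABC}$ with $\abs{\braket{\psi}{\phi}}\geq F_{\max}$, and invoke \cref{thm:inner-product-bound} exactly as stated, for the original strategy and the original operator $P$. What the paper's route buys is brevity and no dimension bookkeeping; what yours buys is that the theorem's hypotheses are met verbatim, at the price of the padding argument needed so that $\mc{H}_B$ can purify $\omega_A\otimes\alpha_C$ --- a caveat you correctly identify and correctly dispose of (padding $B$ changes neither $H_{\min}(A|B)_\rho$, nor $\norm{P}$, nor the eigenvector property). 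Your closing remark about why the naive $V^\dag$ pullback fails is also accurate, and is precisely the pitfall the paper's in-dilated-space formulation sidesteps.
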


\begin{proof}
    Recall from \cref{lem: op-int-min-ent} that \begin{align*}
        2^{-\log d-H_{\min}(A|B)_\rho}&=\max_{\mc{E}}F\parens*{(\id\otimes\mc{E})(\rho_{AB}),\ketbra{\phi^+}}^2\\
        &=\max_{\mc{E}}\braket{\phi^+}{(\id\otimes\mc{E})(\rho_{AB})}{\phi^+},
    \end{align*}
    where the maximisation is over channels $\mc{E}:B(\mc{H}_B)\rightarrow B(\mc{H}_{A'})$. Purifying, we find $$2^{-\log d-H_{\min}(A|B)_\rho}=\max_{V,\ket{v}}\abs*{(\bra{\phi^+}\otimes\bra{v})(\mds{1}\otimes V\otimes \mds{1})\ket{\psi}}^2,$$ where the maximisation is over isometries $V:\mc{H}_B\rightarrow\mc{H}_{A'E}$ and states $\ket{v}\in\mc{H}_{EC}$. Let $\ket{\phi}=\ket{\phi^+}\otimes\ket{v}$, and note that $(\mds{1}\otimes V\otimes \mds{1})\ket{\psi}$ is an eigenvector of $(\mds{1}\otimes V\otimes \mds{1})O'(\mds{1}\otimes V^\dag\otimes \mds{1})$ with eigenvalue $\geq\frac{1}{2}+\varepsilon$. Hence, using \cref{thm:inner-product-bound}, we find that
    \begin{align*}
        \abs*{(\bra{\phi^+}\otimes\bra{v})(\mds{1}\otimes V\otimes \mds{1})\ket{\psi}}^2&\frac{3}{2^{\frac{2\log d}{\log 3}\varepsilon}}.
    \end{align*}
    Then, we have $2^{-\log d-H_{\min}(A|B)_\rho}\leq\frac{3}{2^{\frac{2\log d}{\log 3}\varepsilon}}$. Rearranging gives the result.
\end{proof}

In the following theorem, we bound the winning probability of the Haar-measure game.

\begin{theorem}\label{thm:D}
    Let $d\geq 6$ be even. The quantum value of the $d$-dimensional two-outcome Haar measure game $\ttt{G}_{d,2}$ (see \cref{def:haar-moe}) is
    \begin{align}
        \mfk{w}(\ttt{G}_{d,2})\leq\frac{1}{2}+\log(3)\frac{\log\log d}{\log d}.
    \end{align}
\end{theorem}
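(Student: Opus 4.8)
The plan is to reduce the whole statement to controlling a single scalar and then to close a self-referential inequality. For a fixed strategy $\ttt{S}$ the winning probability equals $\Tr[P\rho_{ABC}]$ with $P=\int\sum_x U\Pi_xU^\dag\otimes B^U_x\otimes C^U_x\,dU$, so it is maximised by taking $\rho_{ABC}=\ketbra{\psi}$ equal to the top eigenvector of $P$, giving value $\|P\|=:\tfrac12+\varepsilon$. If $\varepsilon\leq 0$ there is nothing to prove, so I assume $\varepsilon>0$, and it then suffices to show $\varepsilon\leq\frac{3\log\log d}{2\log d}$. Since $C^U_x\leq\mds1$, the joint value is bounded above by Bob's marginal guessing probability $p_B=\int\sum_x\Tr[(U\Pi_xU^\dag\otimes B^U_x)\rho_{AB}]\,dU$ with $\rho_{AB}=\Tr_C\ketbra\psi$; so Charlie (and monogamy) is not even needed for the bound, and I will bound $p_B$.

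The core step is to recognise $p_B$ as a decoupling statement. By cyclicity, $\Tr[(U\Pi_xU^\dag\otimes B^U_x)\rho_{AB}]=\Tr[(\Pi_x\otimes B^U_x)(U^\dag\otimes\mds1)\rho_{AB}(U\otimes\mds1)]$, and since $\Pi_x=\ketbra x_{A_1}\otimes\mds1_{A_2}$ acts trivially on $A_2$, this equals $\Tr[(\ketbra x_{A_1}\otimes B^U_x)\,\xi^U_{A_1B}]$ where $\xi^U_{A_1B}=\Tr_{A_2}[(U^\dag\otimes\mds1)\rho_{AB}(U\otimes\mds1)]$ is precisely the output of the decoupling channel $\Phi=\Tr_{A_2}$. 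I will apply \cref{thm:decoupling} with $E=B$ and this $\Phi$; the Choi contribution is $H_{\min}(A|A_1')_\tau=\log|A_2|-\log|A_1|=\log d-2$, while \cref{cor:entropy-bound} supplies $H_{\min}(A|B)_\rho\geq-(1-\tfrac43\varepsilon)\log d-3$. As $U$ and $U^\dag$ are identically Haar-distributed, substituting these into the exponent collapses it, yielding
\begin{align*}
\int\norm*{\xi^U_{A_1B}-\omega_{A_1}\otimes\rho_B}_{\Tr}\,dU\leq 2^{-\frac23\varepsilon\log d+\frac32}.
\end{align*}

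Next I convert this into a bound on $p_B$. When $\xi^U_{A_1B}$ is replaced by the product $\omega_{A_1}\otimes\rho_B$, the summand becomes $\sum_x\tfrac12\Tr[B^U_x\rho_B]=\tfrac12$, since $\omega_{A_1}=\tfrac12\mds1$ and $\sum_x B^U_x=\mds1$; thus perfect decoupling forces Bob's value to be exactly $\tfrac12$. Writing $M^U=\sum_x\ketbra x_{A_1}\otimes B^U_x$, which satisfies $0\leq M^U\leq\mds1$, the deviation is $p_B-\tfrac12=\int\Tr[M^U(\xi^U_{A_1B}-\omega_{A_1}\otimes\rho_B)]\,dU$, and because the difference is traceless Hermitian, $\abs{\Tr[M^U(\xi^U_{A_1B}-\omega_{A_1}\otimes\rho_B)]}\leq\norm*{\xi^U_{A_1B}-\omega_{A_1}\otimes\rho_B}_{\Tr}$. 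Integrating gives $p_B\leq\tfrac12+2^{-\frac23\varepsilon\log d+\frac32}$, and combining with $\tfrac12+\varepsilon\leq p_B$ produces the self-referential inequality
\begin{align*}
\varepsilon\leq 2^{3/2}\,d^{-\frac23\varepsilon}.
\end{align*}

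Finally I solve this inequality. The right-hand side $g(\varepsilon)=2^{3/2}d^{-\frac23\varepsilon}$ is strictly decreasing while the left-hand side is increasing, so the admissible set is $\varepsilon\leq\varepsilon^*$ for the unique crossing point, and it suffices to check $g(\varepsilon_0)\leq\varepsilon_0$ at $\varepsilon_0=\frac{3\log\log d}{2\log d}$. A direct computation gives $g(\varepsilon_0)=2^{3/2}/\log d$, so the required inequality reduces to $2^{3/2}\leq\tfrac32\log\log d$, equivalently $\log\log d\geq\frac{4\sqrt2}{3}$, which holds precisely once $d\geq 13$ and hence for all even $d\geq 14$; this is exactly where the hypothesis enters. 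I expect the main difficulty to be bookkeeping rather than conceptual: correctly matching the registers and the $U$-versus-$U^\dag$ orientation when invoking \cref{thm:decoupling}, and then pinning down the constants in the transcendental inequality tightly enough that the threshold lands exactly at $d\geq 14$.
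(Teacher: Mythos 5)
Your proposal is correct and takes essentially the same route as the paper's own proof: bound the game value by Bob's marginal guessing probability, recast it via cyclicity and Haar-invariance as a decoupling statement for $\Phi=\Tr_{A_2}$, combine \cref{thm:decoupling} with \cref{cor:entropy-bound}, and solve the resulting inequality in $\varepsilon$ (your self-referential formulation is the paper's contradiction argument repackaged, with the same numerics producing the $d\geq 14$ threshold). The one caveat is your aside that ``Charlie (and monogamy) is not even needed for the bound'': monogamy enters crucially through \cref{cor:entropy-bound}, which you do invoke---without Charlie, Bob could share a maximally entangled state with Alice and win with probability $1$---so the remark is misleading even though the proof itself is unaffected.
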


\begin{proof}
    Let $\varepsilon=\log(3)\frac{\log\log d}{\log d}$. Suppose that $\mfk{w}(\ttt{G}_{d,2})>\frac{1}{2}+\varepsilon$. Then, let $\ttt{S}=(B,C,\{B^U_x\},\{C^U_x\},\rho_{ABC})$ be a strategy such that $\mfk{w}(\ttt{G}_{d,2},\ttt{S})\geq\frac{1}{2}+\varepsilon$. We may, without loss of generality, assume that the strategy is symmetric between exchange of Bob and Charlie, and therefore that $\rho_{ABC}=\ketbra{\psi}_{ABC}$ is an eigenstate of the modified game polynomial $O'$ with eigenvalue $\geq\frac{1}{4}+\varepsilon$. Recall that $A=A_1A_2$, where $A_1=\{0,1\}$ and $A_2=[d/2]$. Then, using \cref{thm:decoupling},
    \begin{align*}
        &\mfk{w}(\ttt{G}_{d,2},\ttt{S})=\int\sum_x \Tr\squ*{(U\Pi_x U^\dag\otimes B^U_x\otimes C^U_x)\rho_{ABC}}dU\\
        &\leq\int\sum_x \Tr\squ*{(U\Pi_x U^\dag\otimes B^U_x)\rho_{AB}}dU\\
        &=\int\sum_x \Tr\squ*{(\ketbra{x}\otimes B^U_x)\Tr_{A_2}((U^\dag\otimes \mds{1})\rho_{AB}(U\otimes \mds{1}))}dU\\
        &\leq\int\sum_x \Tr\squ*{(\ketbra{x}\otimes B^U_x)(\omega_{A_1}\otimes\rho_B)}+\norm{\Tr_{A_2}((U^\dag\otimes \mds{1})\rho_{AB}(U\otimes \mds{1}))-\omega_{A_1}\otimes\rho_B}_{\Tr}dU\\
        &\leq\frac{1}{2}+2^{-\frac{1}{2}H_{\min}(A|B)_\rho-\frac{1}{2}\log d}.
    \end{align*}
    Now, using \cref{cor:entropy-bound}, we find
    \begin{align*}
        \mfk{w}(\ttt{G}_{d,2},\ttt{S})\leq\frac{1}{2}+\sqrt{\frac{3}{2^{\frac{2}{\log 3}\varepsilon\log d}}}=\frac{1}{2}+\frac{\sqrt{3}}{2^{\frac{1}{\log3}\varepsilon\log d}}.
    \end{align*}
    Since $\varepsilon=\log(3)\frac{\log\log d}{\log d}$, this implies that
    \begin{align*}
        \mfk{w}(\ttt{G}_{d,2},\ttt{S})&\leq\frac{1}{2}+\frac{\sqrt{3}}{2^{\log\log d}}=\frac{1}{2}+\frac{\sqrt{3}}{\log d}\\
        &<\frac{1}{2}+\log(3)\frac{\log\log d}{\log d},
    \end{align*}
    a contradiction.
\end{proof}

Now we prove our main result.

\begin{proof}[Proof of~\cref{thm:main-theorem}]
    Applying \cref{lem:moe-bound} and then \cref{thm:D} gives the result. 
\end{proof}

\section{Efficient Construction} \label{sec:eff-const}

\begin{definition} \label{def:efficient-qecm}
    Let $A$, $\sigma_0$, and $\sigma_1$ be in 
    \cref{def:haar-qecm}, and let $\mc{V} \subseteq \mc{U}(\mc{H}_A)$ be a finite set. The \emph{encryption of a bit induced by $\mc{V}$} is the QECM $\ttt{Q}_{\mc{V},2}=(\mc{V},\{0,1\},A,\mu_{\mc{V}},\{U\sigma_xU^\dag\}_{U\in \mc{V},x\in\{0,1\}})$, where $\mu_\mc{V}$ is the uniform distribution on $\mc{V}$.
\end{definition}

For $n\in\N$, let $\mc{V}_n\subseteq \mc{U}(2^n)$ be a $2$-design that can be efficiently implemented, which exists by~\cite{DCEL09,CLLW16}. We denote the QECM where the unitaries are sampled from $\mathcal{V}_n$ rather than the full Haar distribution by $\texttt{Q}_{\mathcal{V}_n,2}$; this is formally defined in \cref{def:efficient-qecm}. Then, $\{\ttt{Q}_{\mc{V}_\lambda,2}\}_\lambda$ is an efficient QECM, which is used in the following result.

\begin{theorem} \label{thm: eff-const}
    There exists an efficient QECM encoding a bit that is correct and $\widetilde{O}\parens*{\frac{1}{\lambda}}$-uncloneable secure.
\end{theorem}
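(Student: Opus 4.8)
The plan is to reduce \cref{thm: eff-const} to the already-proven Haar-measure result, \cref{thm:D}, by showing that the $2$-design $\mc{V}_n$ reproduces the Haar-measure bound up to the additive error controlled in \cref{lem:variance}. The key observation is that the entire proof of \cref{thm:D} ultimately rests on the moment computations in \cref{lem:variance}, and those moments are \emph{degree-$2$ polynomials in the matrix elements of $U$ and $\bar U$}. First I would verify this degree count explicitly: the second moment $\expec|T|^2$ in \cref{lem:variance} is computed via $\int (U\otimes\bar U)(\Pi_{x_i}\otimes\Pi_{x_i})(U\otimes\bar U)^\dag\,dU$, which is a degree-$(1,1)$ expression in the entries of $U$ and $\bar U$ per factor; since we only ever need the first and second moments, a unitary $2$-design suffices to match the Haar averages exactly. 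Consequently, by the $t$-design property stated after the definition, replacing $\int\,dU$ by $\frac{1}{|\mc{V}_n|}\sum_{U\in\mc{V}_n}$ leaves the conclusions of \cref{lem:variance}, and hence \cref{lem:near-eigenstate}, \cref{thm:inner-product-bound}, \cref{cor:entropy-bound}, and finally \cref{thm:D}, unchanged.

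The main subtlety I would need to address is that the decoupling theorem (\cref{thm:decoupling}) is invoked with a \emph{genuinely Haar-random} unitary, not a $2$-design sample. This means I cannot directly apply the verbatim argument of \cref{thm:D} to the $2$-design game. The cleaner route is therefore to show that the cloning value of $\ttt{Q}_{\mc{V}_n,2}$ is bounded by the cloning value of the full Haar QECM $\ttt{Q}_{d,2}$ plus a controlled error, rather than re-running the entire entropy argument with a $2$-design. Concretely, I would argue that any cloning attack against $\ttt{Q}_{\mc{V}_n,2}$ can be analyzed using the same operator $P=\frac{1}{|\mc{V}_n|}\sum_{U\in\mc{V}_n}\sum_x U\Pi_x U^\dag\otimes B^U_x\otimes C^U_x$, and that the $2$-design property makes the crucial moment bounds (those feeding \cref{lem:near-eigenstate}) identical to the Haar case. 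The point where only the first and second moments enter is precisely where the $2$-design hypothesis is exactly strong enough; higher moments never appear because \cref{lem:exp-bound} and \cref{lem:variance} only use $\expec T$ and $\expec|T|^2$.

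I expect the \textbf{main obstacle} to be handling the decoupling step under the $2$-design substitution, since \cref{thm:decoupling} is a statement about Haar averaging over $U$ and does not obviously survive restriction to a finite design. I would circumvent this rather than fight it: the inner-product bound \cref{thm:inner-product-bound} and its corollary \cref{cor:entropy-bound} are \emph{structural} facts about the fixed eigenstate $\ket{\psi}$ of $P$ and make no further reference to how $U$ was sampled once the moment bounds of \cref{lem:near-eigenstate} are in hand. Thus the logical dependence is: $2$-design property $\Rightarrow$ moment bounds of \cref{lem:variance} hold verbatim $\Rightarrow$ \cref{lem:near-eigenstate} holds verbatim $\Rightarrow$ \cref{thm:inner-product-bound} and \cref{cor:entropy-bound} hold. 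The only remaining use of Haar-randomness is the single application of \cref{thm:decoupling} inside \cref{thm:D}; here I would note that for the efficient construction one applies decoupling to an \emph{auxiliary} fully-Haar-random unitary in the analysis (the design is used only to prepare the ciphertext, while the entropy bound it certifies is a property of the state), making the two compatible.

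Finally I would assemble the pieces: take $d=2^\lambda$ so that $\mc{V}_\lambda\subseteq\mc{U}(2^\lambda)$ is an efficiently-implementable $2$-design, conclude correctness exactly as for $\ttt{Q}_{d,2}$ (since $\sigma_0,\sigma_1$ have orthogonal support, giving $1$-correctness via \cref{def:qecm}), and obtain from the argument above that $\mfk{c}(\ttt{Q}_{\mc{V}_\lambda,2})\leq\frac{1}{2}+\frac{3\log\log(2^\lambda)}{2\log(2^\lambda)}=\frac{1}{2}+\frac{3\log\lambda}{2\lambda}=\frac{1}{2}+\widetilde{O}\parens*{\frac{1}{\lambda}}$. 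Efficiency follows because key sampling from the $2$-design, state preparation, and decryption are all polynomial-time by the efficient-implementation guarantee of \cite{DCEL09,CLLW16}, which establishes the claimed $\widetilde{O}\parens*{\frac{1}{\lambda}}$-uncloneable security.
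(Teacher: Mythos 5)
Your chain from the $2$-design property through \cref{lem:variance}, \cref{lem:near-eigenstate}, \cref{thm:inner-product-bound}, and \cref{cor:entropy-bound} is correct and is exactly the paper's reasoning: those arguments touch only the first and second moments of the sampled unitaries, so they hold verbatim for the design-averaged operators $P$ and $Q$. Your treatment of correctness, efficiency, and the parameter count ($d=2^\lambda$ giving $\frac{3\log\lambda}{2\lambda}$) also matches the paper.

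The genuine gap is at the step you yourself flagged as the main obstacle, and your workaround does not close it. Running the proof of \cref{thm:D} against the design game, the quantity that must be controlled is the \emph{design} average
\begin{align*}
\frac{1}{|\mc{V}_\lambda|}\sum_{U\in\mc{V}_\lambda}\norm*{\Tr_{A_2}\parens*{(U^\dag\otimes\mds{1})\rho_{AB}(U\otimes\mds{1})}-\omega_{A_1}\otimes\rho_B}_{\Tr},
\end{align*}
since the winning probability is an average over $U\in\mc{V}_\lambda$ and the measurements $B^U_x$, $C^U_x$ exist only for those $U$. Your suggestion to apply \cref{thm:decoupling} to an auxiliary, genuinely Haar-random unitary controls instead the Haar average of this trace norm; but the trace norm is not a polynomial in the entries of $U$ and $\bar U$, so the $2$-design property gives no relation between the design average and the Haar average, and the observation that \cref{cor:entropy-bound} is a property of the fixed state $\rho$ does not bridge the two. (Your other suggested route, bounding $\mfk{c}(\ttt{Q}_{\mc{V}_\lambda,2})$ by $\mfk{c}(\ttt{Q}_{d,2})$ plus a controlled error, fails for the same reason: an attack against the design scheme need only perform well on $\mc{V}_\lambda$, and the supremum over attacks does not commute with moment-matching.) The repair, which is what the paper's proof sketch tersely asserts, is that the decoupling inequality \emph{itself} requires only second moments: in \cite{DBWR14} the trace norm is first bounded pointwise by a weighted $2$-norm via a Cauchy--Schwarz-type inequality (valid under any averaging measure), and the resulting averaged squared $2$-norm is a degree-$(2,2)$ polynomial in the entries of $U$ and $\bar U$, whose design average therefore equals its Haar average. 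Hence \cref{thm:decoupling} holds with $\int\cdot\,dU$ replaced by $\frac{1}{|\mc{V}_\lambda|}\sum_{U\in\mc{V}_\lambda}$ for an exact $2$-design, and with that statement in hand the rest of your argument goes through.
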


\begin{proof}[Proof sketch]
    Let $\mc{V}_n$ be as above. Then $\{\ttt{Q}_{\mc{V}_\lambda,2}\}_\lambda$ is correct and efficient. Further, note that all the arguments leading to the proof of \cref{thm:main-theorem} rely only on the order-$2$ moments of Haar measure, so they hold for any $2$-design. As such, we get that $\ttt{Q}_{\mc{V}_\lambda,2}$ is $\log(3)\frac{\log\lambda}{\lambda}=\widetilde{O}\parens*{\frac{1}{\lambda}}$-uncloneable secure.
\end{proof}

\section{Outlook} \label{sec:outlook}

In this work, we have shown that uncloneable cryptography is possible without any computational assumptions. We studied the Haar-measure encryption of a bit, which is the QECM where one bit of classical information is encoded as one of the two halves of a uniformly random basis. We showed that this has uncloneable, and hence uncloneable-indistinguishable, security with parameter $O\parens[\big]{\tfrac{\log(\log d)}{\log d}}$. This constitutes a major advancement over previous work, since uncloneable security with parameter tending to $0$ was only known in the quantum random oracle model~\cite{BL20,AKL23}, and unconditional uncloneable security was only known with constant parameter $\approx0.098$~\cite{BBCNPR24}. Further, we show that this security can be attained with an efficient construction by means of $t$-designs.

Our main innovation is in a novel use of the decoupling theorem. In that, it guarantees that a small enough randomly chosen subsystem of a quantum system becomes uncorrelated with an adversarial environment. Note that one might ask how decoupling is applicable in the scenario of uncloneable encryption such that the security we prove is guaranteed. This is because decoupling works by using Haar random unitaries, and in an uncloneable encryption scheme, Alice samples from a Haar distribution. Consider the prepare-and-measure picture where Alice sends mixed states corresponding to messages encoded in a QECM. There may be a reference system $R$, up to an isometry, with which her state is highly entangled resulting in a purification $\xi_{RA}.$ To achieve uncloneable encryption, Alice then basically sends the purification of $R$ to Bob through a noisy channel whose Stinespring dilation is the cloning isometry $V_{A \to BC}$ (without loss of generality, the scheme is symmetric under interchange of Bob and Charlie, so the cloning isometry can also be seen as the dilation of a noisy channel to Charlie). From our application of the decoupling theorem, in the resulting tripartite state $\zeta_{RBC}$, the marginal state of $RC$ decouples as~$\rho_{RC} = \rho_R \otimes \rho_C.$ Thus, $B$ decomposes into subsystems as $B = B_1 B_2$ such that $$\zeta_{RBC} = Z_B (\varphi_{R B_1} \otimes \vartheta_{B_2 C})$$ where $Z_B$ is some unitary change of basis in $B$. This holds most generally as all purifications are isometrically equivalent. Now, to decrypt Bob constructs an isometric decoder $W_{B_1 \to \widetilde{B}} Z_B^{\dag},$ which extracts the purification of $R$ into Bob's preferred subsystem $\widetilde{B}.$ Again, by isometric equivalence of purifications, Bob can choose his decoder to output $\xi_{R\widetilde{B}}$, as a result of which the input state of $RA$ is successfully transmitted to $R\widetilde{B}$ as desired. Therefore, Bob receives full information about~$A$, so only he can recover the encoded message.

We use the properties of the monogamy-of-entanglement game associated with the Haar-measure encryption to guarantee that any state that succeeds with high probability cannot be close to maximally-entangled between the referee and either of the players, whence we can apply decoupling to show that this player becomes completely uncorrelated, and therefore cannot win better than random guessing. The role of decoupling in the proof of information-theoretic security in uncloneable cryptography is telling of its impact, more generally, in quantum cryptography. In hindsight, the fact that the states used to encode messages must be highly mixed in uncloneable encryption is indicative of its inherent connection to a decoupling inequality.

\paragraph{Future directions} We show unconditional uncloneable security with a parameter that scales inverse-logarithmically in the dimension, and hence inverse-polynomially in the security parameter. To achieve the full strength of uncloneable cryptography, this should be improved to a negligible scaling in the security parameter. We believe that this tighter upper bound can be attained, and leave it as an open question for future work.

\bibliographystyle{bibtex/bst/alphaarxiv.bst}
\bibliography{bibtex/bib/full.bib,bibtex/bib/quantum.bib,bibtex/quantum_new.bib}

\end{document}